\newtheorem{problem}{Problem}
\newcommand{\EXTRA}[1]{}
\newcommand{\ov}{\overline}
\newcommand{\Al}{\Sigma}
\newcommand{\eps}{\varepsilon}
\newcommand{\N}{\mathbb{N}}
\begin{document}


\setcounter{page}{269}
\publyear{24}
\papernumber{2182}
\volume{191}
\issue{3-4}

\finalVersionForARXIV


\title{Decision Problems on Copying and Shuffling}

\author{Vesa Halava\thanks{Supported by emmy.network foundation under the aegis of the Fondation de Luxembourg.}\thanks{Address for correspondence: Department of Mathematics and Statistics, University of Turku, Finland.}
\\
Department of Mathematics and Statistics\\
University of Turku, Finland\\
vesa.halava@utu.fi
\and Tero Harju \\
Department of Mathematics and Statistics\\
University of Turku, Finland\\
harju@utu.fi
\and Dirk Nowotka \\
Department of Computer Science\\
Kiel University, Germany \\
dn@zs.uni-kiel.de
\and Esa Sahla \\
Department of Mathematics and Statistics\\ University of Turku, Finland\\
etsahla@gmail.com   }

\runninghead{V. Halava et al.}{Decision Problems on Copying and Shuffling}

\maketitle

\begin{abstract}
We study decision problems of the form: given a regular or linear context-free language $L$,
is there a word of a  given fixed form in $L$, where
given fixed forms are based on word operations copy, marked copy,
shuffle and their combinations.
\end{abstract}

\begin{keywords}
Regular language, linear context-free language, shuffle, marked copy, reverse copy, membership problem
\end{keywords}

\section{Introduction}

We consider classic problems on decidability issues of formal languages.
We shall fill in gaps that have remained for elementary operations copy and shuffle and their variants on words and languages. The presented results, as well as the known results on the topic, are presented in the table affixed in the second page leaving two open cases for further studies.

\medskip
We investigate the decidability status of several special membership problems
for regular and linear context-free (linear CF) languages where it is asked whether or not the language contains a word of a certain form. Let $L$ be a given language.
The operations and the question are the following:
\begin{enumerate}
\itemsep=0.9pt
\item \emph{copy}, i.e., does there exist a square $ww\in L$ for some word $w$?

\item \emph{reversed copy}, i.e.,  does there exist a word $ww^r\in L$ for some word $w$, where
 $w^r$ denotes the reversal of the word $w$?

\item \emph{marked copy}, i.e.,  does there exist a word $w\ov{w}\in L$ for some word $w$,
where $\ov{w}$ denotes a marked copy of the word $w$? (For a definition of a marked copy, see page~\pageref{def:mc})

\item \emph{self-shuffle}, i.e.,  does there exist a word $u\in w\shuffle w$ with $u\in L$ for some word $w$, where $\shuffle$ denotes the shuffle operations of two words?

\item \emph{shuffle with reverse}, i.e.,  does there exist a word $u\in w\shuffle w^r$ with $u\in L$ for some word $w$?

\item \emph{marked shuffle}, i.e., does there exist
a word $u\in w\shuffle \bar{w}$ with $u\in L$ for some word $w$?
\end{enumerate}

The decidability statuses of these questions are listed in the following table, where
D (resp. U) means that the problem is \emph{decidable} (resp. \emph{undecidable}) and the question mark denotes problems that remain unsettled. After the symbols  D and U
we give a reference for the proof in the text.
Here Reg stands for the regular languages and Lin for the linear context-free languages.

\bigskip

\begin{minipage}{11pc}
\begin{tabular}{l|lcl}
&Reg&&Lin\\ \hline
$ww \in L$&D (Cor.~\ref{Reg:copyspes}) &\phantom{space}& U  (Thm.~\ref{Lin:copy})\\
$ww^r \in L$&D (Cor.~\ref{Reg:mirror}) && U (Thm.~\ref{Lin:revcopy})\\
$w\ov{w} \in L$&D (Thm.~\ref{Reg:mc})&&U (Thm.~\ref{Lin:marked})\\
$w \shuffle w \cap L \ne \emptyset$ &?&&U (Thm.~\ref{Lin:shuffle})\\
$w \shuffle w^r \cap L \ne \emptyset$&?&&U (Thm.~\ref{Lin:revshuf})\\
$w \shuffle \ov{w} \cap L \ne \emptyset$ &U (Thm.~\ref{Reg:marked}) &&U (Thm.~\ref{Lin:marshuff}) \\
\end{tabular}
\end{minipage}

\bigskip

We also study decidability of some special inclusion problems related to the above problems.
For example, we investigate the problem of whether a given regular, linear context-free or context-free language is closed under taking squares, and also, the problem whether the set all squares generated by another given language is a subset of a given language.

\EXTRA{
\begin{problem}
Prove that RE languages can be generated by 2-way CF grammars. These have symmetric productions $A \leftrightarrow \alpha$.
\end{problem}

E.~Friedman considered languages accepted by \emph{simple machines}. These are stateless (i.e., they have one state)
deterministic pushdown automata without $\varepsilon$-moves that accepted on empty stack.
She showed, TCS 1976,  that the inclusion problem is undecidable for
the languages of these automata.
}

There are naturally many related language operations to be investigated;
see Rampersad and Shallit \cite{Rampersad2010}, where
among other results it was shown that it is undecidable
whether a context-free grammar generates a square.
We deal with this problem in Theorem~\ref{Lin:copy} for linear CF-languages.

\section{Preliminaries}

Let $\Sigma$ be a finite alphabet. A \emph{word} over $\Sigma$ is a finite sequence of symbols of $\Sigma$. The \emph{empty word} is denoted by $\eps$. The \emph{length} of a word $w=a_1\cdots a_k$, where $a_i\in\Al$ for all $i=1,\dots, k$, is $k$
and it is denoted by $|w|$.
The set of all words over $\Sigma$ is denoted by $\Sigma^*$ and the set of all non-empty words by~$\Sigma^+$.

For two words $u, v\in \Sigma^*$,  their \emph{concatenation} is
$u\cdot v=uv$. A \emph{factorization} of a word $w\in \Sigma^*$ is a finite sequence $u_1,\dots, u_k$, where $u_i\in \Sigma^*$ for all $i$, such that $w=u_1\cdots u_k$. A word $u$ is a \emph{prefix} of a word $w$ if $w=uv$ for some word $v$.

The powers of words are defined inductively: $w^0=\eps$ and
for all $n\in \N$, $w^{n+1}=w^n\cdot w$.
We say that a word $w$ is \emph{primitive}, if for all $u\in \Al^*$,
$w=u^n$ implies that $n=1$.

Let $w^r$ denote the \emph{reversal} (or the \emph{mirror image}) of $w$, that is,
$w^r= a_n\cdots a_1$ for $w=a_1\cdots a_n$, where $a_i\in\Al$ for all $i$.

With an alphabet $\Al$ we accompany a \emph{marked copy alphabet} \label{def:mc}
$\bar{\Al} = \{\bar{a} \mid a \in \Al \}$ , where $\Al \cap \bar{\Al} = \emptyset$.
For a word $w=a_1a_2 \cdots a_n$, let
$\ov{w} = \bar{a}_1\bar{a}_2 \cdots \bar{a}_n$ be the \emph{marked copy} of $w$.

A subset of $\Al^*$ is called \emph{language}.
Denote by $L^c$ the complement of  $L$, that is, $L^c=\Al^* \setminus L$.

We assume that the reader is familiar with the basic notions of language theory; see e.g., Salomaa~\cite{Sa-73} for definitions of \emph{regular languages}, \emph{finite automata}, \emph{context-free (CF) languages}, \emph{pushdown automata},   \emph{context-sensitive languages} and \emph{pumping lemmas} for regular and context-free languages.

We briefly recall a few basic facts. First of all, recall that a language $L$ is a \emph{linear CF-language} if it is accepted by a pushdown automaton (PDA) that makes at most one reversal (from increasing to decreasing mode) on its stack. Equivalently, each linear CF-language is generated by a linear context-free grammar, where the productions have at most one non-terminal on the right hand side.

\begin{example}\label{LanE}
The language of all palindromes of even length,
\begin{equation*}
E = \{ww^r\mid w\in \Al^*\},
\end{equation*}
is a linear CF-language. Indeed,
$E$ can be accepted by a non-deterministic linear PDA which first reads symbols onto the stack
until it (non-deterministically) decides to check by popping symbols whether the rest of the input word agrees with the stack content.
\end{example}

We use the Pumping lemma for regular languages to show that certain languages are not regular.

\begin{lemma}\label{PLreg}
For a regular language $L$,  there exists a natural number $p\ge 1$ such that,
if $w \in L$ is of length $|w| \ge p$, then it has a factorization $w=xyz$
with $|y|\ge 1$ and $|xy|\le p$,  such that $xy^nz \in L$ for all $n\in\N$.
\end{lemma}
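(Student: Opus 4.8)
The plan is to run the standard finite-automaton argument. Since $L$ is regular, fix a deterministic finite automaton $\mathcal{A}=(Q,\Al,\delta,q_0,F)$ with a total transition function that recognizes $L$, and put $p=|Q|$, the number of states. The quantity $p$ will serve as the pumping constant, and note $p\ge 1$ since $Q$ is nonempty.

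First I would take an arbitrary word $w=a_1\cdots a_m\in L$ with $m=|w|\ge p$ and follow the computation of $\mathcal{A}$ on $w$: set $q_i=\delta(q_0,a_1\cdots a_i)$ for $i=0,1,\dots,m$, so that $q_0$ is the initial state and $q_m\in F$ because $w\in L$. This produces a sequence of $m+1\ge p+1$ states drawn from a set of size $p$. By the pigeonhole principle, two of the first $p+1$ states $q_0,q_1,\dots,q_p$ must coincide, say $q_j=q_k$ with $0\le j<k\le p$. I then set $x=a_1\cdots a_j$, $y=a_{j+1}\cdots a_k$, and $z=a_{k+1}\cdots a_m$, so that $w=xyz$, $|y|=k-j\ge 1$, and $|xy|=k\le p$, exactly as required.

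Finally I would verify the pumping property. Reading $y$ from state $q_j$ returns $\mathcal{A}$ to $q_j=q_k$, and hence an easy induction on $n$ — the base case $n=0$ being immediate, since $y^0=\eps$ — shows that reading $y^n$ from $q_j$ also returns to $q_j$. Consequently $\delta(q_0,xy^nz)=\delta(q_j,y^nz)=\delta(q_j,z)=\delta(q_k,z)=q_m\in F$, so $xy^nz\in L$ for every $n\in\N$.

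There is no genuine obstacle in this argument; the only minor points worth stating explicitly are that one may assume without loss of generality that $\mathcal{A}$ is a complete DFA, so that the run on $w$ and on each $xy^nz$ is always defined, and that the conclusion is required to cover the degenerate case $n=0$, which gives $xz\in L$.
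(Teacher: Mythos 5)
Your proof is correct and complete: it is the classical pigeonhole argument on the run of a complete DFA, with the repeat forced among the first $p+1$ states so that $|xy|\le p$, and the $n=0$ case handled. The paper states this lemma without proof as a standard fact (citing Salomaa), so your argument is exactly the canonical one it implicitly relies on.
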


The Pumping lemma for CF-languages has two simulaneous pumps.

\begin{lemma}\label{PLCF}
For a CF-language $L$, there exists a natural number $p\ge 1$ such that,
if $w\in L$ has length $|w| \ge p$, then it has a factorization
$w=uvwxy$ with $|vx|\ge 1$ and $|vwx|\le p$,  such that
$uv^nwx^n y \in L$ for all $n\in\N$.
\end{lemma}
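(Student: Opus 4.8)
The plan is the classical parse-tree argument. First I would fix a context-free grammar $G=(V,\Al,P,S)$ generating $L\setminus\{\eps\}$ and put it in Chomsky normal form, so that every production is $A\to BC$ with $B,C\in V$ or $A\to a$ with $a\in\Al$. The quantitative input is the elementary fact, proved by induction on the height, that a parse tree whose longest root-to-leaf path uses $h\ge 1$ edges has a yield of length at most $2^{h-1}$ (because right-hand sides have length at most two). Accordingly I set $p=2^{|V|}$; then $p\ge 1$, and the empty word, having length $0<p$, is never at stake.

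Now let $w\in L$ with $|w|\ge p$ and fix a parse tree $T$ of $w$ in $G$. By the length bound, some root-to-leaf path of $T$ has at least $|V|+1$ internal (nonterminal) nodes; fix a \emph{longest} root-to-leaf path $P$ and consider its bottom $|V|+1$ nonterminal nodes. Since only $|V|$ labels are available, two of these carry the same label, say $A$; let $N_1$ be the higher and $N_2$ the lower such node. Because $P$ is a longest path and $N_1$ sits within distance $|V|+1$ of a leaf along $P$, the subtree rooted at $N_1$ has height at most $|V|+1$, hence its yield has length at most $2^{|V|}=p$.

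Write the yield of $T$ accordingly as $w=u\alpha y$, where $\alpha$ is the yield of the subtree at $N_1$, and further $\alpha=v\beta x$, where $\beta$ is the yield of the subtree at $N_2$; thus $S\Rightarrow^{*}uAy$, $A\Rightarrow^{*}vAx$ and $A\Rightarrow^{*}\beta$, and $|v\beta x|=|\alpha|\le p$. (The factor $\beta$ is the one written $w$ in the statement; I rename it to avoid colliding with the word $w\in L$.) For the condition $|vx|\ge 1$: in Chomsky normal form $N_1$ expands by a rule $N_1\to BC$, the descendant $N_2$ lies strictly below exactly one of $B,C$, and the sibling subtree has a nonempty yield since in this normal form every nonterminal derives a nonempty word; that nonempty yield is a factor of $v$ or of $x$, so $vx\ne\eps$. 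Finally, since $N_1$ and $N_2$ share the label $A$, grafting a copy of the $N_1$-subtree in place of the $N_2$-subtree, repeated $n$ times, produces a legal parse tree with yield $uv^{n}\beta x^{n}y$ for every $n\ge 1$, and grafting the $N_2$-subtree in place of the $N_1$-subtree gives $u\beta y$ for $n=0$; hence $uv^{n}\beta x^{n}y\in L$ for all $n\in\N$.

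I expect no genuine obstacle here; the only care needed is the bookkeeping around the choice of $N_1$ --- picking it low enough, via the longest-path trick, so that $|v\beta x|\le p$, while the absence of $\eps$- and unit productions in Chomsky normal form is exactly what forces $|vx|\ge 1$. If one prefers to avoid the normal form, the same argument goes through with the branching constant $b=\max\{|\gamma|\mid A\to\gamma\in P\}$ in place of $2$ and $p=b^{|V|+1}$, provided $G$ has first been cleaned of useless symbols and of $\eps$- and unit productions.
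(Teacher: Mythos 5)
Your proof is correct: it is the standard Chomsky-normal-form parse-tree argument, with the quantitative bookkeeping (the bound $2^{h-1}$ on yields, the choice $p=2^{|V|}$, the longest-path trick to keep $|v\beta x|\le p$, and the use of the absence of $\eps$- and unit productions to force $|vx|\ge 1$) all handled properly. The paper itself gives no proof of this lemma --- it is stated as a known fact with a reference to Salomaa's textbook --- so there is no alternative argument to compare against; your write-up is exactly the proof that reference would supply.
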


Next we define two special languages.
Firstly, let $P \subseteq \Al^*$ be a language. The \emph{copy language} of $P$
is defined as the set of all second powers of words of $P$:
\[
C_P= \{ww \mid w \in P\}.
\]
It is well-known that for a regular language $P$,
the copy language $C_P$ is context-sensitive, but not context-free (see, for example, \cite{Kutrib2018}). We state the following open problem concerning the copy languages.
A \emph{one-counter automaton} is a pushdown automaton, with a single stack letter, which is able to check the emptiness of the stack.

\begin{problem}
Is the complement of $C_{\Al^*}$ a one-counter language?
\end{problem}

Considering the marked copy, the problem becomes easier.

\begin{lemma}\label{MC:onec}
For a regular language $P$, the complement of the
marked copy language $\{w\ov{w} \mid w \in P\}$
is a one-counter language.
\end{lemma}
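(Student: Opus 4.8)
The plan is to build a one-counter automaton $M$ that accepts the complement of $\{w\ov w \mid w \in P\}$ by checking, on input $u \in (\Al \cup \bar\Al)^*$, that $u$ fails at least one of the conditions defining a word of the form $w\ov w$ with $w \in P$. Concretely, $u$ lies in the complement iff one of the following holds: (i) $u$ is not of the shape (a word over $\Al$)(a word over $\bar\Al$), i.e.\ some barred letter precedes some unbarred letter; (ii) $u = u_1 u_2$ with $u_1 \in \Al^*$, $u_2 \in \bar\Al^*$ but $|u_1| \ne |u_2|$; (iii) the lengths match but the unbarred part $u_1$ is not in $P$; (iv) the lengths match, $u_1 \in P$, but $u_2 \ne \ov{u_1}$, i.e.\ some position carries the wrong letter. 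I would have $M$ nondeterministically guess at the outset which of these ``reasons'' it is going to verify, and accept iff the chosen verification succeeds.

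Each branch is implementable with a single counter. Branch (i) is purely regular (a finite automaton suffices, hence also a one-counter automaton that never touches its counter). Branch (ii): while reading the unbarred prefix $M$ increments the counter once per letter, then while reading the barred suffix it decrements once per letter, and it accepts iff the counter is nonzero when the input ends (or a barred letter is seen before the prefix is ``closed'' inconsistently, etc.); this is exactly the power a one-counter automaton has, since it may test the counter for zero. Branch (iii): run the finite automaton for the regular language $P^c$ on the unbarred prefix while simultaneously using the counter to verify that the barred suffix has the same length (increment on $\Al$, decrement on $\bar\Al$, require the counter to return to zero and the $P^c$-automaton to be in an accepting state); so this branch accepts $\{u_1 u_2 : u_1 \in P^c,\ u_2 \in \bar\Al^*,\ |u_1| = |u_2|\}$. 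Branch (iv): $M$ nondeterministically guesses a position $i$ and a letter $a \in \Al$, pushes a counter unit for each of the first $i$ unbarred letters while checking that the $i$-th is $a$, then pops one unit per barred letter and, when the counter hits zero, checks that the current barred letter is \emph{not} $\bar a$; it must also verify $u_1 \in P$ with the finite automaton for $P$ running in parallel on the prefix, and that the overall shape is $\Al^* \bar\Al^*$ with equal block lengths (extra counter bookkeeping as in branch~(ii), or fold the two length checks together since the guessed prefix is an initial segment). The union over all branches is again a one-counter language because one-counter languages are closed under union (take the disjoint union of automata with a new initial state branching by $\eps$-moves).

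The only genuinely delicate point is branch (iv), where a single counter must serve two purposes at once: locating the mismatched position $i$ from the left end of the barred block. The trick is that the barred block is a suffix, so ``distance from the right end'' is not directly available, but ``distance $i$ from the left end of $u_1$'' equals ``distance $i$ from the left end of $u_2$'' once we have separately forced $|u_1| = |u_2|$; hence counting up to $i$ on the prefix and counting back down to $0$ on the suffix lands us exactly on position $i$ of the barred block. One must be slightly careful to interleave this with the equal-length check and the $P$-membership check using only the one counter plus finite control — the $P$-check and the shape-check are finite-state, and the two length-related counts (total block lengths, and the partial count to $i$) can be arranged not to conflict because the count to $i$ is a prefix of the full count. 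I would carry out the construction branch by branch, note the closure of one-counter languages under union, and conclude.
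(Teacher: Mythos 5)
Your construction is correct and takes essentially the same route as the paper: a nondeterministic one-counter automaton that guesses which defect of the input it will certify (wrong shape, unequal block lengths, unbarred block not in $P$, or a letter mismatch at a guessed position $i$, located by counting up to $i$ on the prefix and back down to $0$ on the barred suffix), using closure under union. The one caveat is your branch (iv): a single counter cannot simultaneously verify $|u_1|=|u_2|$ and locate position $i$ (neither of your two suggested fixes works with one counter), but this is harmless --- a positional mismatch alone already certifies $u \ne w\ov{w}$, and length failures are caught by branch (ii), so you should simply drop the redundant length and $P$-membership checks from that branch.
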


\begin{proof}
Sketch of the proof: Assume $P$ is accepted by a
finite automaton $A$, that is, $P=L(A)$.
Let $\bar{A}$ be the \emph{copied version} of $A$, i.e.,
where the letters in the transitions are changed to marked letters.

\medskip
We construct a one reversal nondeterministic PDA $B$ for $A$ and its marked copy automaton $\bar{A}$ as follows:
\begin{enumerate}
\item $B$ simulates $A$ and reads symbols from $\Al$ and adds one to the counter
to count the length of the prefix read so far.

\item At one point $B$ remembers the symbol $a\in \Al$ under its reading
head and stops writing to the stack.
$B$ continues by reading the rest of the non-marked part $w$.

\item If $w\notin P$, the input is accepted.

\item Otherwise $B$ simulates $\bar{A}$ for the marked part of the input and decreases the counter on each step.

\item When the stack is empty and the input is not read fully,
$B$ checks if the current symbol of the input is equal to $\ov{a}$.
If not, then the input is accepted.
\end{enumerate}

\vspace*{-7mm}
\end{proof}

Secondly, we define the \emph{shuffle} (language) of two words $u,v \in \Al^*$ as
follows:
\[
\begin{split}
u\shuffle v=& \{u_1v_1 \cdots u_nv_n \mid u_i, v_i \in \Al^* \text{ for all  } i=1,\dots, n \\
& \text{ and }  u=u_1u_2 \cdots u_n, \ v=v_1v_2 \cdots v_n \}.
\end{split}
\]
In the above factorizations of $u$ and $v$ we allow that some of the factors $u_i$ and $v_i$ are empty.

\medskip
Let $\Al$ and $\Delta$ be two alphabets.
A mapping  $g\colon \Al^* \to \Delta^*$ is a \emph{morphism} if, for all $u,v\in \Al^*$, $g(uv)=g(u)g(v)$.

\medskip
For the undecidability proofs, we use reductions from the
\emph{Post's Correspondence Problem} (PCP, for short).
The PCP was introduced and proved to be undecidable by E. Post in 1946; see
\cite{Po-46}. We shall use the modern form of the problem and define the PCP using monoid morphisms: assume that $g$ and $h$ are two morphisms from
$\Al^*$ into $\Delta^*$, where $\Al=\{a_1,\dots ,a_n\}$ is an alphabet of
$n$ letters.
The pair $(g,h)$ is called an \emph{instance} of the PCP, a word
$w$ satisfying
\begin{equation}\label{vk1}
g(w)=h(w).
\end{equation}
is called a \emph{solution} of the instance
$(g,h)$. The \emph{size} of an instance is the size of the domain
alphabet, i.e., the size is equal to $|\Sigma|$.

\begin{theorem}
It is undecidable for instances $(g,h)$ whether or not it has a solution.
\end{theorem}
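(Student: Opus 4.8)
The plan is to reduce a known undecidable problem — concretely, the halting problem for deterministic Turing machines — to the PCP. Given a Turing machine $M$ and an input $x$, I would construct an instance $(g,h)$ of the PCP over a suitable alphabet $\Sigma$ so that $(g,h)$ has a solution if and only if $M$ halts on $x$. The guiding idea is that a solution word $w$ should, under the two morphisms, spell out the \emph{computation history} of $M$ on $x$: a string of the form $\#C_0\#C_1\#C_2\#\cdots\#C_t\#$ in which $C_0$ is the initial configuration, each $C_{i+1}$ is the configuration obtained from $C_i$ by one step of $M$, and $C_t$ contains a halting state. The morphisms are arranged so that $h(w)$ always lags exactly one configuration behind $g(w)$; then the constraint $g(w)=h(w)$ forces every $C_{i+1}$ to be the genuine successor of $C_i$, so that a solution exists precisely when a halting computation exists.

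It is technically cleaner to first establish undecidability of the \emph{modified} PCP (MPCP), where a solution is required to start with a designated first pair, and then reduce MPCP to PCP. For the MPCP instance I would use: a start pair that puts $\#C_0\#$ on one side and $\#$ on the other (this creates the one-configuration offset); ``copying'' pairs $(a,a)$ for each tape symbol and for the separator $\#$, which transcribe the untouched part of the tape; ``simulation'' pairs that encode each transition $\delta(q,a)=(q',b,D)$ of $M$ as a local rewrite of the short window around the head; and finally termination pairs that, once a halting state appears, let the shorter side consume the leftover tape symbols, together with a closing pair that equalizes the two sides exactly when the history has been completed. One then verifies the two implications: a halting run of $M$ yields a solution by concatenating the obvious pairs in order, and conversely any solution must, by the ``lag by one'' design and a short induction on the blocks between consecutive $\#$'s, transcribe an actual halting computation. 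The passage from MPCP to PCP is then the standard padding trick: add two fresh symbols, insert one of them after every letter in the $g$-images and before every letter in the $h$-images, and adjoin a special start pair and a special end pair so that a sequence can match only if it begins with the former and ends with the latter; this makes PCP solutions correspond bijectively to MPCP solutions.

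The main obstacle is not a single clever step but the bookkeeping needed to make the simulation pairs and the boundary (start and halting) gadgets interact correctly: one must check that the one-configuration offset is created exactly once and never destroyed, that no ``illegal'' shortcut allows the two sides to agree before a halting configuration is reached, and that the termination gadget cannot fire prematurely. Making these local-to-global consistency arguments airtight — essentially proving that the only way to win the matching game is to be honest about the computation of $M$ — is where the real content lies; the rest is definitional. This is, of course, Post's theorem \cite{Po-46}; the Turing-machine encoding sketched above is the now-standard route, and for the purposes of the present paper we simply invoke the statement.
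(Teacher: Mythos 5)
The paper offers no proof of this theorem at all: it is stated as a classical result with a citation to Post \cite{Po-46}, which is exactly what you do in your closing sentence. Your preceding sketch of the standard reduction (halting problem $\to$ MPCP via computation histories with a one-configuration lag, then MPCP $\to$ PCP by the padding trick) is the canonical argument and is correct in outline, so there is nothing to reconcile with the paper beyond noting that both ultimately defer to the literature.
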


It is known that for the size $n\le 2$, the PCP is decidable; see \cite{EKR82} and \cite{HHH02}.
On the other hand, for sizes $n\ge 5$,  the PCP is known to be
undecidable; see \cite{Neary}. The decidability statuses for $n=3$ and $4$ are open.
Note that in basic undecidability proofs, the morphisms
$g$ and $h$ are \emph{non-erasing}, that is, $g(a)\ne \eps \ne h(a)$ for all $a\in \Al$.
This is also the case in \cite{Neary}.

\section{Regular languages}

In this section we study the problems defined in the first section for regular languages.

\subsection{Powers and copies}

Our first theorem is well-known and can be regarded as folklore.

\begin{theorem}\label{Reg:copy}
Let $n \ge 2$ be a fixed integer and $P \subseteq \Al^*$ a regular language.
It is decidable for regular languages $R\subseteq \Al^*$ if there exists a power $w^n \in R$ for some word $w \in P$.
Indeed, the existence of a power $w^n \in R$ is a PSPACE complete problem.
\end{theorem}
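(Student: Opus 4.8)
The plan is to reduce the problem to a non-emptiness question about an intersection of regular languages. First I would handle the decidability part. Given that $P$ and $R$ are regular, the language $\{w \mid w^n \in R\}$ is regular: if $R = L(A)$ for a DFA (or NFA) $A$, one can build an automaton recognizing the $n$-th roots by running $n$ synchronized copies of $A$, guessing the intermediate states $q_0 \xrightarrow{w} q_1 \xrightarrow{w} \cdots \xrightarrow{w} q_n$ with $q_0$ initial and $q_n$ final, and accepting $w$ iff all $n$ segments are consistent with these state transitions. Intersecting this with $P$ and testing emptiness gives decidability. So the decidability half is routine.

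The substantive claim is PSPACE-completeness. For membership in PSPACE, I would argue that the root automaton described above, while of exponential size if built explicitly (its states are $n$-tuples of states of $A$), can be explored on the fly: nondeterministically guess a word $w$ symbol by symbol, maintaining the $n$-tuple of current states of $A$ (plus the fixed guessed target tuple $(q_0,\dots,q_n)$, which has polynomial size since $n$ is fixed) together with a state of an NFA for $P$; this uses polynomial space, so the problem is in NPSPACE $=$ PSPACE by Savitch. One must be slightly careful that $w$ may be exponentially long, but nondeterministic polynomial-space search does not need to store $w$, only the current configuration, and termination can be enforced by a step counter bounded by the (exponential) number of configurations, writable in polynomial space.

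For PSPACE-hardness I would reduce from a canonical PSPACE-complete problem about automata, the natural candidate being the emptiness (or rather non-emptiness) of the intersection of a family of DFAs, or equivalently the DFA-intersection problem, which is PSPACE-complete (Kozen). Given DFAs $A_1,\dots,A_k$ over a common alphabet $\Gamma$, I want to cook up a single regular language $R$ (and take $P = \Al^*$, or a simple regular $P$) such that $R$ contains an $n$-th power iff $\bigcap_i L(A_i) \neq \emptyset$. The idea is to encode a candidate common word $x$ and force the $n$ ``copies'' inside $w^n$ to each be checked against a different $A_i$; concretely, with $n$ fixed one can pad with separators and use the block structure of $w^n$, or iterate the construction. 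The hard part will be making this encoding work with a \emph{fixed} $n$ — since $n$ is constant, one cannot directly put a growing number of automata into the $n$ blocks, so instead I would reduce from a single PSPACE-complete problem that is already phrased with a bounded number of components, such as the word problem for polynomial-space Turing machines or linear-bounded automata, encoding an accepting computation as a word and using the square (the case $n=2$) to enforce a consistency check between consecutive configurations in the standard ``two copies of the tape, shifted'' manner; for general $n \ge 2$ the same construction applies, possibly padding to ignore the extra $n-2$ copies. This LBA-style encoding is the real content of the hardness proof, and getting the regular language $R$ to precisely capture ``$w$ encodes a valid halting computation, checked via the $w^n$ structure'' is where the care is needed.
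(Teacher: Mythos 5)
Your decidability argument coincides with the paper's: guess the intermediate states $q_0 \xrightarrow{\,w\,} q_{i_1} \xrightarrow{\,w\,} \cdots \xrightarrow{\,w\,} q_{i_n}$ and test non-emptiness of $R_{0,i_1}\cap R_{i_1,i_2}\cap\cdots\cap R_{i_{n-1},i_n}$ (each $R_{i,j}$ already cut down by $P$). That half is correct, and your on-the-fly PSPACE upper bound is also correct --- indeed more than is needed, since for fixed $n$ the intersection involves only $n+1$ automata and the product construction already places the problem in $\mathrm{P}$.

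The gap is in the hardness half, and your own misgivings point at exactly the right spot. The paper's reduction takes finite automata $\mathcal{A}_1,\dots,\mathcal{A}_n$, chains them with a fresh separator $\#$ (final states of $\mathcal{A}_i$ to the initial state of $\mathcal{A}_{i+1}$), and observes that $L_1\cap\cdots\cap L_n\neq\emptyset$ iff some word $(u\#)^n$ is accepted; PSPACE-hardness then transfers from automata intersection non-emptiness. This forces the exponent to equal the number of automata in the intersection instance, so the completeness assertion must be read with $n$ supplied as part of the input rather than fixed. You correctly identify that a fixed $n$ cannot absorb a growing family of automata, but the escape route you sketch --- encoding accepting computations of a polynomial-space machine and checking configuration consistency through the block structure of $w^n$ for fixed $n$ --- cannot succeed: by the decision procedure above, the fixed-$n$ problem is solvable in polynomial time (polynomially many state sequences, each an emptiness test on a product of $n+1$ polynomial-size automata), so it is not PSPACE-hard unless $\mathrm{P}=\mathrm{PSPACE}$. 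In short, you never actually produce a hardness reduction, and the one you propose to look for does not exist in the fixed-$n$ setting; the reduction that does work is the paper's, with $n$ varying.
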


\begin{proof}
Let $A$ be a finite automaton accepting $R$, i.e.,
$R=L(A)$.
Let the states of $A$ be $q_0, q_1, \ldots, q_m$, where $q_0$ is the initial state.
Define, for all $i$ and $j$, the regular language $R_{ij}$ by
\[
R_{i, j} = \{w \mid  q_i \xrightarrow{ \ w \ }  q_j \} \cap P,  
\]
where $q_i \xrightarrow{ \ w \ }  q_j$ denotes that there is a computations from the state $q_i$ to the state $q_j$ reading the word $w$ in $A$.
Then there is an $n$th power $w^n$ with $w \in P$ accepted by $A$ if and only if there is
an accepting sequence,
\[
q_0 \xrightarrow{ \ w \ } q_{i_1}  \xrightarrow{ \ w \ }\ldots  \xrightarrow{ \ w \ } q_{i_n},
\]
where $q_{i_n}$ is a final state, i.e., if
$R_{0,i_1} \cap R_{i_1,i_2} \cap  \cdots  \cap R_{i_{n-1},i_n}  \ne \emptyset$.
For each sequence $0, i_1, \ldots, i_n$, the intersection is a regular language.
Moreover, there are only finitely many such sequences of length $n+1$.
Since the emptiness problem
is decidable for regular languages, the claim follows.

\medskip
For PSPACE completeness we need to do the reduction above to the other direction.  Indeed, let $\mathcal{A}_1,\dots, \mathcal{A}_n$ be finite automata accepting languages $L_1,\dots, L_n$. Now construct a new automaton $\mathcal{A}$ by adding transitions reading a new symbol $\#$ from the final states of $\mathcal{A}_i$ to initial state of $\mathcal{A}_{i+1}$ for $i=1,\dots, n-1$, and add new final state $f$ to $\mathcal{A}$, such that from all final states of $\mathcal{A}_n$, there is a transition to $f$ reading $\#$. It is immediate that $L_1\cap\cdots \cap L_n\ne \emptyset$ if and only if there exists $w(=u\#)$ such that $w^n\in L(\mathcal{A})$. The PSPACE completeness now follows from the PSPACE completeness of emptyness of the intersection problem, see~\cite{Rabin}.
\end{proof}

By setting $P=\Al^*$ and $n=2$ in Theorem~\ref{Reg:copy}, we have, see also Anderson et al. \cite{Anderson2009}.

\begin{corollary}\label{Reg:copyspes}
It is decidable for a given regular language $L\subseteq \Al^*$ whether or not there exists $w\in \Al^*$ such that $ww\in L$.
\end{corollary}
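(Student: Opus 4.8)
The plan is to specialize Theorem~\ref{Reg:copy}, so the work is essentially to observe that the hypotheses of that theorem are met and that nothing further is needed. Concretely, I would take $P = \Al^*$, which is certainly a regular language, and $n = 2$, which satisfies $n \ge 2$. Then Theorem~\ref{Reg:copy} asserts that it is decidable, given a regular language $R \subseteq \Al^*$, whether there exists a word $w \in \Al^*$ with $w^2 \in R$. Since $w^2 = ww$ by the inductive definition of powers ($w^2 = w^1 \cdot w = w \cdot w$), this is exactly the statement that it is decidable whether there exists $w$ with $ww \in L$, after renaming $R$ to $L$.

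So the proof is a single sentence: apply Theorem~\ref{Reg:copy} with $P = \Al^*$ and $n = 2$. I would write something like: \emph{This is Theorem~\ref{Reg:copy} with $P = \Al^*$ and $n = 2$, noting that $w^2 = ww$.}

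There is no real obstacle here — the corollary is a pure instantiation. The only thing worth a moment's care is the identification of the "power" formulation with the "square" formulation, i.e., that $w^n$ for $n=2$ literally denotes the concatenation $ww$, which is immediate from the definition of powers given in the Preliminaries. One could also remark, as the paper already does in the sentence preceding the corollary, that this recovers a known result of Anderson et al.~\cite{Anderson2009}, but that is a bibliographic aside rather than part of the proof. If one wanted a self-contained argument without invoking the full generality (including the PSPACE-hardness direction) of Theorem~\ref{Reg:copy}, the decidability half alone suffices, and its proof — intersecting the finitely many languages $R_{0,i_1} \cap R_{i_1,i_2}$ over pairs of states and testing emptiness — is exactly the construction already given there; no new idea is required.
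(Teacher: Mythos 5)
Your proposal is correct and is exactly the paper's own derivation: the corollary is obtained by instantiating Theorem~\ref{Reg:copy} with $P=\Al^*$ and $n=2$, and the observation that $w^2=ww$ is immediate from the definition of powers. Nothing further is needed.
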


The proof of Theorem~\ref{Reg:copy} also gives a well-known result
on the \emph{roots} of words:
the $n$th root of a regular language $R$,
\[
\sqrt[n]{R} = \{ w \mid w^n \in R \},
\]
as well as, the collection of all the roots,
\[
\sqrt[*]{R} =\bigcup_{n \ge 2}\sqrt[n]{R} = \{ w \mid w^n \in R \text{ for  some } n \ge 2 \},
\]
are regular. Let us mention that the regularity  does not hold in the limit case as we see from the following lemma.

\begin{lemma}
For a regular language $R$, the language
\[
\mathit{Pr}(\sqrt[*]{R}) = \{ w \mid w \text{ primitive} \text{ and } w^n \in R \text{ for  some } n \ge 2 \}
\]
is not necessarily regular.
\end{lemma}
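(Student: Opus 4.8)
The plan is to exhibit a single regular language $R$ over a two-letter alphabet for which $\mathit{Pr}(\sqrt[*]{R})$ is provably non-regular, using the Pumping lemma for regular languages (Lemma~\ref{PLreg}). A natural candidate is $R = (ab)^*$, or more flexibly $R = \{a,b\}^* \setminus F$ for a suitable finite or simple set $F$; let me work with $R=(ab)^*$. The powers lying in $R$ are exactly the powers of $(ab)^k$ for various $k$, so $\sqrt[*]{R}$ consists of all words $w$ with $w^n = (ab)^m$ for some $n\ge 2$, i.e. all $w$ that are themselves powers of $ab$ except possibly $w=ab$ itself — more precisely $w^n \in (ab)^*$ forces $w \in (ab)^*$ since $(ab)^*$ is closed under taking roots. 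Then $\mathit{Pr}(\sqrt[*]{R})$ contains $ab$ but no other element of $(ab)^+$ (all of which are imprimitive), so $\mathit{Pr}(\sqrt[*]{R}) \subseteq \{ab\}$, which is finite, hence regular. So $(ab)^*$ does not work, and I need a genuinely more subtle $R$.

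The real idea: pick $R$ so that $\sqrt[*]{R}$ contains a two-parameter family of \emph{primitive} words whose lengths are not eventually periodic in a way a finite automaton can track. For instance, take the alphabet $\{a,b\}$ and
\[
R = \{\, (a^i b)^n \mid i \ge 1,\ n \ge 2 \,\}.
\]
This $R$ is \emph{not} regular, though, so instead I would take a regular superset whose roots still pin down the primitive words I want. A cleaner route: let $R = (a^+ b)^+$, the set of all words that are nonempty concatenations of blocks $a^i b$ with $i\ge 1$. This is regular. Now $w^n \in R$ with $n\ge 2$: a power of $w$ lies in $(a^+b)^+$, and one shows (by looking at where the $b$'s sit) that $w$ itself must be of the form $a^i b$ for some $i\ge 1$ (these are exactly the primitive words whose square is in $R$), together with already-imprimitive words. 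Hence $\mathit{Pr}(\sqrt[*]{R}) = \{ a^i b \mid i \ge 1 \}$, which \emph{is} regular — still no good. The obstruction is that one-block words $a^i b$ form a regular set.

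So I would use two interacting blocks. Take
\[
R = \{\, w \in \{a,b\}^* \mid w = (a^i b a^j b)^n \text{ for some } i,j\ge 1,\ n\ge 1 \,\};
\]
again this is not regular, so take the regular language $R' = (a^+ b a^+ b)^+$ and argue that for $w$ primitive, $w^n \in R'$ for some $n\ge 2$ iff $w = a^i b a^j b$ with $i \ne j$ (when $i=j$ the word $a^iba^ib$ is itself a square, hence imprimitive). Then $\mathit{Pr}(\sqrt[*]{R'}) = \{ a^i b a^j b \mid i,j\ge 1,\ i\ne j \}$; pumping the first $a$-block while the word is forced to stay in this set will produce a word with $i=j$, contradiction. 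Concretely, with pumping length $p$, take $w = a^p b a^{p+1} b \in \mathit{Pr}(\sqrt[*]{R'})$, write $w=xyz$ with $|xy|\le p$ and $y = a^t$, $t\ge 1$ inside the first block; then $xy^0z = a^{p-t}b a^{p+1}b$ is still of the form $a^i b a^j b$ with $i\ne j$ (choosing $w$ so that no deletion makes $i=j$, e.g. start from $a^p b a^{2p} b$ so that $i$ after pumping lies in $[p-t,\ldots]$ and can never equal $2p$), yet by the argument above it should be in $\mathit{Pr}(\sqrt[*]{R'})$, giving the contradiction — actually the clean contradiction is the reverse: choose the initial word so that \emph{every} pump stays in the regular constraints of the Pumping lemma but some pump leaves $\mathit{Pr}(\sqrt[*]{R'})$, i.e. forces $i=j$.

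The main obstacle, and the step I'd spend the most care on, is the structural lemma: \emph{if $w$ is primitive and $w^n \in (a^+ b a^+ b)^+$ for some $n\ge 2$, then $w \in a^+ b a^+ b$ with unequal exponents.} This requires a Lyndon–Schützenberger / primitivity argument — a power $w^n$ that is also a power $u^m$ forces $w$ and $u$ to be powers of a common word, and since $w$ is primitive and $u = a^i b a^j b$, one gets $w \in \{u\} \cup \{$proper roots of $u\}$; then one checks $a^iba^jb$ is primitive exactly when $i\ne j$. Once that lemma is in hand, the description of $\mathit{Pr}(\sqrt[*]{R'})$ as $\{a^iba^jb : i\ne j\}$ is exact, and a single application of Lemma~\ref{PLreg} — pump $y$ inside the first block down to $0$ or up appropriately, starting from $w = a^{p}\, b\, a^{2p}\, b$, to force the two exponents equal — finishes the proof. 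I would also remark that the same $R'$ simultaneously shows $\sqrt[*]{R'}$ itself need not be regular once one restricts to primitive roots, underlining that the regularity in Theorem~\ref{Reg:copy}'s corollary genuinely breaks in the limit/primitive case.
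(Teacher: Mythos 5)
There is a genuine gap: your structural lemma is false. For $R'=(a^+ba^+b)^+$, membership of $w^n$ in $R'$ only says that $w^n$ is a \emph{concatenation of possibly different} blocks from $a^+ba^+b$; it does not say that $w^n$ is a power of a single such block, so the Lyndon--Sch\"utzenberger argument you invoke (``a power $w^n$ that is also a power $u^m$\ldots'') simply does not apply. In fact $w^n\in R'$ holds (with $n=2$, say) for \emph{every} $w\in(a^+b)^+$, so $\mathit{Pr}(\sqrt[*]{R'})$ is the set of all primitive words of $(a^+b)^+$, not $\{a^iba^jb : i\ne j\}$; for instance $w=aba^2babab$ is primitive, lies in $\mathit{Pr}(\sqrt[*]{R'})$, and is not of your claimed form. (The same miscomputation is why you wrongly discarded $R=(a^+b)^+$ earlier: its primitive roots are not just $\{a^ib\}$.) Your final pumping step is also not sound as written: pumping $a^pba^{2p}b$ up by $y=a^t$ reaches $a^{p+kt}ba^{2p}b$, which only hits the forbidden shape $i=j$ when $t$ divides $p$, so no contradiction is guaranteed.

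The conclusion you are aiming at is nevertheless reachable from your $R'$, but only after two repairs: first intersect $\mathit{Pr}(\sqrt[*]{R'})$ with the regular language $a^*ba^*b$ to cut it down to $\{a^iba^jb : i,j\ge 1,\ i\ne j\}$ (regularity is preserved under intersection with a regular set, so non-regularity of the intersection suffices), and then argue via the \emph{complement}: if that set were regular, so would be $\{a^iba^ib : i\ge 1\}$, which pumps to a contradiction immediately. This complement trick is exactly what the paper does, and much more economically: it takes $R=\Al^*$, so that $\mathit{Pr}(\sqrt[*]{R})$ is the set of \emph{all} primitive words $Q$, and pumps $a^nba^nb\in Q^c$ to produce the primitive word $a^{n+k}ba^nb\in Q^c$, a contradiction. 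Your search for a ``genuinely more subtle $R$'' was unnecessary --- the trivial choice $R=\Al^*$ already works, precisely because the condition $w^n\in R$ is then vacuous and only primitivity remains.
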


\begin{proof}
Let $Q=\mathit{Pr}(\sqrt[*]{\Al^*})$, that is, $Q$ is the
language of all primitive words over $\Al$.
Assume that $Q$ is regular over $\Al=\{a,b\}$,
and consider the (regular) complement $Q^c$ of $Q$ consisting of all non-primitive words
over $\Al$. Then $a^nba^nb \in Q^c$ for all $n$, and thus for sufficiently large $n$, (indeed, larger that $p$ in the Pumping Lemma~\ref{PLreg})
 $a^{n+k}ba^n b\in Q^c$ for some $k \ge 1$ by the Pumping Lemma,
but the word $a^{n+k}ba^n b$ is clearly primitive; a contradiction.

\medskip
Indeed, $Q$ is not even deterministic CF-language; see Lischke~\cite{Lischke2011}.
\end{proof}

For the sake of completeness, we state the following theorem on the inclusion problem $C_P \subseteq L$ for the copy languages  $C_P$.

\begin{theorem}\label{Reg:inclusion}
For regular languages $R$ and $P$, it is decidable if $C_P \subseteq R$ holds.
In particular,
it is decidable if a regular language $R$ is closed under taking squares, i.e.,
if $C_R \subseteq R$.
\end{theorem}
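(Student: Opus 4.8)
The plan is to reduce $C_P \subseteq R$ to an emptiness test for a regular language, exploiting the fact that $C_P = \{ww \mid w \in P\}$ intersected with $R^c$ should be empty. The obstacle is that $C_P$ itself is not regular (it is context-sensitive but not context-free, as noted in the excerpt), so we cannot simply form $C_P \cap R^c$ and test emptiness directly. Instead, I would imitate the state-splitting idea already used in the proof of Theorem~\ref{Reg:copy}.

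\begin{proof}
Let $A$ be a finite automaton for $R$ with states $q_0, q_1, \dots, q_m$, initial state $q_0$ and final state set $F$. For each pair $i,j$ put
\[
R_{i,j} = \{ w \mid q_i \xrightarrow{\ w\ } q_j \text{ in } A\},
\]
which is a regular language (it is recognized by $A$ with $q_i$ as initial state and $\{q_j\}$ as final states). Now a square $ww$ with $w \in P$ fails to lie in $R$ precisely when every computation of $A$ on $ww$ that starts at $q_0$ either gets stuck or ends in a non-final state; equivalently, for every state $q_k$ there is \emph{no} pair of computations $q_0 \xrightarrow{\ w\ } q_k \xrightarrow{\ w\ } q_j$ with $q_j \in F$. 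Since $A$ may be taken deterministic and complete, for each $w$ there is exactly one $q_k$ with $q_0 \xrightarrow{\ w\ } q_k$, and exactly one $q_j$ with $q_k \xrightarrow{\ w\ } q_j$; so $ww \notin R$ for this $w$ iff $w \in R_{0,k}\cap R_{k,j}$ for the unique such $k,j$ with $q_j \notin F$. Hence
\[
C_P \not\subseteq R \quad\Longleftrightarrow\quad \bigcup_{k=0}^{m}\ \bigcup_{j \,:\, q_j \notin F}\ \bigl( R_{0,k} \cap R_{k,j} \cap P \bigr) \ \ne\ \emptyset .
\]
The right-hand side is a finite union of intersections of regular languages, hence regular, and its emptiness is decidable; this proves the first claim. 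For the second claim, take $P = R$: then $C_R \subseteq R$ is decidable by the same argument.
\end{proof}

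Let me reconsider the determinism point, since that is the only delicate step. If one does not want to invoke determinization, one argues directly with a nondeterministic $A$: $ww \in R$ iff there exist $k$ and $j \in F$ with $w \in R_{0,k} \cap R_{k,j}$, so $C_P \subseteq R$ iff for every $w \in P$ there is such a pair, i.e.\ iff $P \subseteq \bigcup_{k}\bigcup_{j \in F}(R_{0,k}\cap R_{k,j})$; the complement inclusion to be tested is $P \cap \bigl(\bigcup_{k}\bigcup_{j\in F}(R_{0,k}\cap R_{k,j})\bigr)^c = \emptyset$, again an emptiness question for a regular language, since the class of regular languages is closed under finite union, intersection and complement. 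I expect the only thing a careful reader will want spelled out is exactly this Boolean-closure bookkeeping, i.e.\ that the quantifier "$w$ appears twice but in lock-step through $A$" is captured precisely by the finitely many products $R_{0,k}\cap R_{k,j}$ — that is the heart of the matter, and it is essentially the observation already made in the proof of Theorem~\ref{Reg:copy}.
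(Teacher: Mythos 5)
Your argument is correct and is essentially the paper's: the paper simply observes that $C_P \subseteq R$ fails iff $R^c$ contains a square $ww$ with $w \in P$ and invokes Theorem~\ref{Reg:copy} on the regular language $R^c$, which is exactly what your determinized version (final states $q_j \notin F$) amounts to, with the state-splitting of that theorem's proof written out explicitly. Your second, nondeterministic variant (testing $P \subseteq \bigcup_{k}\bigcup_{j\in F}(R_{0,k}\cap R_{k,j})$, i.e.\ $P \subseteq \sqrt[2]{R}$) is an equally valid rephrasing using the regularity of the square root, already noted in the paper.
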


\begin{proof}
The claim follows from Theorem~\ref{Reg:copy}. Indeed,
$C_P \nsubseteq R$ if and only if the complement $R^c$ of $R$ contains a square $ww$ with $w \in P$.
Since the complement of regular language is also regular, the claim follows.
\end{proof}

The technique in the proof of Theorem~\ref{Reg:copy} can also be used for the marked copy problem.

\begin{theorem}\label{Reg:mc}
It is decidable for regular languages $R$,
if $w\ov{w} \in R$ for some~$w$.
\end{theorem}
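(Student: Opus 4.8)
The plan is to adapt the automaton-decomposition argument from the proof of Theorem~\ref{Reg:copy} to handle the asymmetry between the unmarked prefix $w$ and the marked suffix $\ov{w}$. Suppose $R = L(A)$ for a finite automaton $A$ with states $q_0,\dots,q_m$ and initial state $q_0$. A word in $R$ of the form $w\ov{w}$ must be read by $A$ in two stages: first $A$ reads the $\Al$-part $w$ going from $q_0$ to some intermediate state $q_i$, then $A$ reads the $\bar{\Al}$-part $\ov{w}$ going from $q_i$ to some final state $q_j$. So the key observation is that $w\ov{w}\in R$ for some $w$ if and only if there exist states $q_i$ and a final state $q_j$ such that some word $w$ satisfies both $q_0 \xrightarrow{\ w\ } q_i$ (reading $w$ over $\Al$) and $q_i \xrightarrow{\ \ov{w}\ } q_j$ (reading the marked copy $\ov{w}$ over $\bar{\Al}$).

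Next I would turn the second condition into a condition over the \emph{same} alphabet $\Al$ so that the two constraints can be intersected. Define the ``unmarking'' morphism $\pi\colon \bar{\Al}^* \to \Al^*$ by $\pi(\bar a) = a$, and for each pair $(i,j)$ let $A_{ij}$ be the automaton obtained from $A$ by taking $q_i$ as initial state, $q_j$ as the unique final state, restricting to transitions labelled by letters of $\bar{\Al}$, and then relabelling each such transition $\bar a$ by $a$. Then $L(A_{ij}) = \{\pi(v) \mid q_i \xrightarrow{\ v\ } q_j \text{ over } \bar{\Al}\}$ is a regular language over $\Al$, and $q_i \xrightarrow{\ \ov{w}\ } q_j$ holds precisely when $w \in L(A_{ij})$. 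Similarly let $B_i$ be $A$ restricted to $\Al$-transitions with initial state $q_0$ and final state $q_i$, giving a regular language $L(B_i)$ over $\Al$. Then $w\ov{w}\in R$ for some $w$ if and only if $L(B_i)\cap L(A_{ij}) \ne \emptyset$ for some $i$ and some final state $q_j$ of $A$.

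Finally I would invoke closure of regular languages under intersection together with decidability of emptiness for regular languages: there are only $(m+1)^2$ pairs $(i,j)$ to check, each intersection $L(B_i)\cap L(A_{ij})$ is effectively a regular language, and emptiness is decidable, so the whole procedure terminates. This establishes the theorem. I do not expect a serious obstacle here: the only point requiring a little care is the bookkeeping that ensures the input word really has the shape (block of unmarked letters)(block of marked letters) with the second block being the marked image of the first — but since $\Al$ and $\bar{\Al}$ are disjoint, any word accepted by $A$ that lies in $\Al^*\bar{\Al}^*$ automatically splits at the unique boundary, and the intersection condition pins down exactly that the $\bar{\Al}$-part is $\ov{w}$ for the $w$ read in the first part. (One should also handle the degenerate case $w=\eps$, i.e.\ whether $\eps\in R$, which is trivially decidable and corresponds to $i=0$ with $q_0$ final.)
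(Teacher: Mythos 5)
Your proposal is correct and follows essentially the same route as the paper's own proof: decompose an accepting run of $w\ov{w}$ at the intermediate state, relabel the marked transitions by the unmarking morphism so that both halves become regular languages over $\Al$, and test finitely many intersections for emptiness. The paper packages the relabelling as a single ``swapped'' copy $B$ of the automaton and first intersects $R$ with $\Al^*\ov{\Al}^*$, but this is only a cosmetic difference from your per-pair automata $B_i$ and $A_{ij}$.
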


\begin{proof}
Let $C$ be a finite automaton accepting $R$, i.e.,
$R=L(C)$. As the regular languages are closed under intersection, let $A$ be a finite automaton accepting the language $R\cap \Al^*\ov{\Al}^*$.
Furthermore, let $B$ be a copy of $A$ where all
letters in the transitions are changed from marked to unmarked and vice versa.
Therefore, $L(B)=\{\ov{u}v\mid u\ov{v}\in R\}$. We may assume that the state set of $A$ as well as that of $B$ is $\{q_0, \dots, q_m\}$ and $q_0$ is the initial state.
\eject
Define, for all $i$ and $j$, the regular language $R_{ij}$ by
\[
\begin{split}
R^A_{i, j} &= \{w \mid  q_i \xrightarrow{ \ w \ }  q_j  \text{ in } A\}\\
R^B_{i, j} &= \{w \mid  q_i \xrightarrow{ \ w \ }  q_j  \text{ in } B\}.
\end{split}
\]
Now, there is a word $w\ov{w}\in R$ if and only if for some state $q_j$ and a final state $q_n$,
\[
q_0 \xrightarrow{ \ w \ } q_{j}  \xrightarrow{ \ \ov{w} \ } q_{n},
\]
in $A$ (and $C$), i.e., if
$R^A_{0,j} \cap R^B_{j,n} \cap \Al^* \ne \emptyset$ for some $j$ and $n$.
There are only finitely many intersections of regular languages to be checked, so the claim follows again from the decidability of the emptiness problem
for regular languages.
\end{proof}

For the reverse copy problem requesting if $ww^r \in R$ for some $w$, we first state a more general case.

\begin{theorem}\label{Reg:mirror-n}
Let $k\ge 1$ be fixed.
It is decidable for regular languages $R$ if $w_1 w_1^r \cdots w_k w_k^r \in R$ for some $w_1, \ldots, w_k$.
\end{theorem}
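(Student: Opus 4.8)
The plan is to reuse the state-sequence idea from the proofs of Theorems~\ref{Reg:copy} and~\ref{Reg:mc}, with the new ingredient being that the regular languages are effectively closed under reversal. Fix a finite automaton $A$ accepting $R$, with state set $\{q_0,\dots,q_m\}$ and $q_0$ the initial state. For states $p,q$ write $R^A_{p,q}=\{w\mid p\xrightarrow{\ w\ }q\text{ in }A\}$, a regular language, and let $(R^A_{s,q})^r=\{w^r\mid w\in R^A_{s,q}\}$ be its reversal, which is again regular and is computed by reversing the transitions of $A$. Then a word $w\,w^r$ leads $A$ from $p$ to $q$ exactly when, for some intermediate state $s$, we have $w\in R^A_{p,s}$ and $w\in (R^A_{s,q})^r$; hence there is an even palindrome $w\,w^r$ with $p\xrightarrow{\ ww^r\ }q$ in $A$ if and only if
\[
\bigcup_{s}\bigl(R^A_{p,s}\cap (R^A_{s,q})^r\bigr)\ne\emptyset ,
\]
which is decidable by the decidability of the emptiness problem for regular languages.

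Using this, I would define a directed graph $G$ on the state set of $A$ by placing an edge $p\to q$ precisely when the nonemptiness test above succeeds, i.e.\ when some even palindrome takes $A$ from $p$ to $q$; note that every state carries a self-loop, corresponding to the choice $w=\eps$. The graph $G$ is computable. The \emph{key observation} is that a word of the prescribed shape $w_1w_1^r\cdots w_kw_k^r$ is accepted by $A$ if and only if there are states $r_0=q_0,r_1,\dots,r_k$ with $r_k$ final and $r_{i-1}\to r_i$ an edge of $G$ for every $i=1,\dots,k$; here the point is that the blocks need not coincide, so each $w_iw_i^r$ may be realised independently along its own edge. Consequently $w_1w_1^r\cdots w_kw_k^r\in R$ for some $w_1,\dots,w_k$ if and only if $G$ contains a walk of length exactly $k$ from $q_0$ to a final state, and since $k$ is fixed there are only finitely many candidate walks to inspect.

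The technical heart of the argument is the first paragraph — that the set of words $w$ with $p\xrightarrow{\ ww^r\ }q$ is effectively regular — but this is essentially immediate from closure of the regular languages under reversal and intersection; equivalently one may observe that $R^A_{p,q}\cap E\ne\emptyset$ is decidable because $E$ from Example~\ref{LanE} is (linear) context-free and context-free emptiness is decidable. So I do not expect a genuine obstacle here. The only point requiring care is the bookkeeping that distinct blocks $w_iw_i^r$ are allowed distinct words: this is exactly why the problem reduces to the existence of a walk in $G$ rather than, as in the copy case, to a single nonempty intersection of several languages, and it is what keeps the problem decidable.
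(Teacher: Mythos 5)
Your proof is correct, but it takes a genuinely different route from the paper's. The paper forms $E_k=\{w_1w_1^r\cdots w_kw_k^r\mid w_i\in\Al^*\}$ as the $k$-fold concatenation of the even-palindrome language $E$ of Example~\ref{LanE}, observes that $E_k$ is therefore context-free, intersects it with $R$, and appeals to the decidability of the emptiness problem for context-free languages. You instead stay entirely inside the regular world: for each pair of states $(p,q)$ you decide, using closure of regular languages under reversal and intersection, whether some even palindrome $ww^r$ carries $A$ from $p$ to $q$, and then you reduce the whole question to the existence of a length-$k$ walk from the initial state to a final state in the resulting finite graph $G$. Both arguments are sound; your decomposition $w\in R^A_{p,s}\cap(R^A_{s,q})^r$ is exactly right, and the equivalence between acceptance of $w_1w_1^r\cdots w_kw_k^r$ and the existence of such a walk holds precisely because the blocks may be chosen independently, as you emphasize (the self-loops via $w=\eps$ correctly account for empty blocks). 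As for what each approach buys: the paper's version is shorter and extends immediately to unbounded $k$ (Theorem~\ref{Reg:mirror-inf}) by taking the union $\bigcup_{k\ge 1} E_k$, which is still context-free; your version avoids context-free machinery altogether, gives a more transparent complexity bound, and yields the same extension to unbounded $k$ by replacing ``walk of length exactly $k$'' with ordinary reachability in $G$.
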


\begin{proof}
Let, for fixed $k$,
\[
E_k=\{w_1 w_1^r \cdots w_k w_k^r \mid w_i \in \Al^*,
i=1,2, \ldots, k\}.
\]	
As a concatenation of $k$ copies of the linear CF-language $E$ from Example~\ref{LanE}, $E_k$ is a (nondeterministic) CF-language. Therefore, also the language
\[
L_k=E_k \cap R
\]	
is a CF-language as context-free languages are closed under intersection with regular languages. Since the emptiness problem is decidable
for context-free languages, the claim follows.
\end{proof}

Setting $k=1$ in the previous theorem yields a result for the reverse copy problem.

\begin{corollary}\label{Reg:mirror}
It is decidable for a regular language $R$,
if $ww^r \in R$ for some word~$w$.	
\end{corollary}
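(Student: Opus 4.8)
The plan is to derive Corollary~\ref{Reg:mirror} directly from Theorem~\ref{Reg:mirror-n} by specializing the parameter $k$. Setting $k=1$, the language $E_1 = \{ww^r \mid w \in \Al^*\}$ is exactly the language $E$ of even-length palindromes from Example~\ref{LanE}, and Theorem~\ref{Reg:mirror-n} asserts that it is decidable for regular $R$ whether $E_1 \cap R \neq \emptyset$. But $E_1 \cap R \neq \emptyset$ is precisely the statement that there exists a word $w$ with $ww^r \in R$, which is the claim.

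Concretely, I would write: by Theorem~\ref{Reg:mirror-n} with $k=1$, it is decidable for a regular language $R$ whether $w_1 w_1^r \in R$ for some word $w_1$; renaming $w_1$ as $w$ gives the corollary. If a self-contained phrasing is preferred, one can also recall the mechanism of the proof of Theorem~\ref{Reg:mirror-n}: $E$ is a linear CF-language (Example~\ref{LanE}), context-free languages are closed under intersection with regular languages, so $E \cap R$ is context-free, and emptiness of context-free languages is decidable; hence one can decide whether $ww^r \in R$ for some $w$.

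There is essentially no obstacle here — the corollary is an immediate instance of the preceding theorem, and the only thing to be careful about is matching the notation ($E_1 = E$, and $w_1 w_1^r \in R$ for some $w_1$ is the same as $ww^r \in R$ for some $w$). I would keep the proof to one or two sentences, mirroring the style already used in the excerpt for Corollary~\ref{Reg:copyspes}, which is likewise obtained by specializing parameters in Theorem~\ref{Reg:copy}.
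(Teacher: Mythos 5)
Your proposal is correct and matches the paper exactly: the paper obtains Corollary~\ref{Reg:mirror} precisely by setting $k=1$ in Theorem~\ref{Reg:mirror-n}. Nothing further is needed.
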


The problem of Theorem~\ref{Reg:mirror-n} turns out to be undecidable
for context-free languages with $k=1$; see Corollary~\ref{CF:mirror}.

\medskip
The proof of Theorem~\ref{Reg:mirror-n} can also be used for the following theorem.

\begin{theorem}\label{Reg:mirror-inf}
It is decidable if a regular language $R$ contains
a word of the form $w_1 w_1^r \cdots w_k w_k^r$ \  for some $w_1, \ldots, w_k$
and $k \ge 1$.
\end{theorem}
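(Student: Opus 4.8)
The plan is to reduce the infinite-$k$ version to the fixed-$k$ version by observing that the union $\bigcup_{k\ge 1} E_k$ is itself a single linear context-free language, so that the same ``intersect with $R$, then test emptiness'' argument from Theorem~\ref{Reg:mirror-n} applies verbatim. Concretely, I would define
\[
E_\infty \;=\; \bigcup_{k\ge 1} E_k \;=\; \{\, w_1 w_1^r \cdots w_k w_k^r \mid k\ge 1,\ w_i\in\Al^*\,\}
\]
and first argue that $E_\infty$ is context-free. The cleanest way to see this is to give a grammar: with $E$ the palindrome language of Example~\ref{LanE}, a word of $E_\infty$ is just a nonempty concatenation of words of $E$, i.e.\ $E_\infty = E^+$. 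Since $E$ is context-free and the context-free languages are closed under (nonempty) Kleene plus, $E_\infty$ is context-free. Alternatively one writes an explicit CF grammar with rules $S\to TS \mid T$ and $T\to aTa \mid \eps$ for $a\in\Al$, or describes a one-reversal-per-``block'' PDA — but the closure argument is shorter.

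Then the argument is identical to that of Theorem~\ref{Reg:mirror-n}: the language $L_\infty = E_\infty \cap R$ is context-free, because the context-free languages are closed under intersection with regular languages, and $R$ contains a word of the required form if and only if $L_\infty \ne \emptyset$. Since the emptiness problem is decidable for context-free languages, the statement follows.

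There is essentially no obstacle here; the only point that needs a line of care is the claim $E_\infty = E^+$. The inclusion $E^+ \subseteq E_\infty$ is immediate from the definitions (take $w_i$ to be the ``first halves'' of the successive palindromic blocks), and $E_\infty \subseteq E^+$ holds because each factor $w_i w_i^r$ is a word in $E$. One subtlety worth a remark is the degenerate factors: some $w_i$ may be empty, so that $w_i w_i^r = \eps$; this causes no problem since $\eps\in E$ and adjacent empty blocks can simply be dropped, and in any case the ``$+$'' rather than ``$*$'' is only a cosmetic choice ($\eps \in E_\infty$ already via $k=1$, $w_1=\eps$). Thus the decision procedure is: build a CF grammar for $E^+$, intersect with a finite automaton for $R$ to get a CF grammar for $L_\infty$, and run the standard CF emptiness test.
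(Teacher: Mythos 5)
Your proposal is correct and follows essentially the same route as the paper: form the union $\bigcup_{k\ge 1}E_k$, observe it is context-free (the paper simply asserts this; your identification of it as $E^+$ and appeal to closure under Kleene plus is a clean justification), intersect with $R$, and test emptiness.
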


\begin{proof}
Let\vspace*{-1.6mm}
\[
L=\bigcup_{k=1}^\infty E_k\,.
\]	
Clearly, $L$ is a CF-language, and as in the proof of Theorem~\ref{Reg:mirror-n}, the language
$L \cap R$ is a CF-language and the claim follows from the decidability of the emptiness problem of CF-languages.
\end{proof}

\subsection{Shuffles}

We begin by defining the language
\[
L_\Al = \bigcup_{w \in \Al^*} w \shuffle \ov{w}\,.
\]

\begin{lemma}\label{shuffle:marked}
The language $L_\Al$ is not a CF-language if $|\Al| \ge 2$.
\end{lemma}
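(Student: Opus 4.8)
The plan is to rewrite $L_\Al$ in terms of two projection morphisms, reduce to a binary alphabet, and then derive a contradiction from the Pumping Lemma for context-free languages (Lemma~\ref{PLCF}).

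First I would record a convenient description of $L_\Al$. Let $\pi_1\colon(\Al\cup\ov{\Al})^*\to\Al^*$ be the morphism that deletes every barred letter, and let $\pi_2\colon(\Al\cup\ov{\Al})^*\to\Al^*$ be the morphism that deletes every unbarred letter and erases the bar from each remaining one. Since $\Al\cap\ov{\Al}=\emptyset$, a word $u$ lies in $w\shuffle\ov{w}$ precisely when the subsequence of its unbarred letters equals $w$ and the subsequence of its barred letters equals $\ov{w}$, so
\[
L_\Al=\{\,u\in(\Al\cup\ov{\Al})^*\mid \pi_1(u)=\pi_2(u)\,\}.
\]
Context-free languages are closed under intersection with regular languages, hence it suffices to prove the lemma for $|\Al|=2$: if $L_\Al$ were context-free and $a,b\in\Al$ are distinct, then $L_\Al\cap\{a,b,\ov{a},\ov{b}\}^*=L_{\{a,b\}}$ would be context-free as well.

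So suppose for contradiction that $L_{\{a,b\}}$ is context-free with pumping constant $p$, and apply Lemma~\ref{PLCF} to $u=a^pb^p\ov{a}^p\ov{b}^p$, which satisfies $\pi_1(u)=a^pb^p=\pi_2(u)$, hence $u\in L_{\{a,b\}}$, and has length $4p\ge p$. This yields $u=\alpha\beta\gamma\delta\zeta$ with $|\beta\delta|\ge1$, $|\beta\gamma\delta|\le p$ and $\alpha\beta^i\gamma\delta^i\zeta\in L_{\{a,b\}}$ for all $i\ge0$, and, $\pi_1,\pi_2$ being morphisms, $\pi_k(\alpha\beta^i\gamma\delta^i\zeta)=\pi_k(\alpha)\pi_k(\beta)^i\pi_k(\gamma)\pi_k(\delta)^i\pi_k(\zeta)$ for $k=1,2$. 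I would then split into three cases. If $\beta\delta$ contains no barred letter, then $\pi_2(\alpha\beta^i\gamma\delta^i\zeta)$ is independent of $i$, whereas $|\pi_1(\alpha\beta^2\gamma\delta^2\zeta)|>|\pi_1(u)|=2p$, so $\pi_1\ne\pi_2$ for $i=2$; the case where $\beta\delta$ contains no unbarred letter is symmetric. In the remaining case $\beta\delta$ contains both a barred and an unbarred letter, so the factor $\beta\gamma\delta$ of $u$ meets both the unbarred prefix $a^pb^p$ and the barred suffix $\ov{a}^p\ov{b}^p$; being contiguous and of length at most $p$, and since each of the four blocks of $u$ has length $p$, it must be a factor of $b^p\ov{a}^p$ containing at least one $b$ and at least one $\ov{a}$. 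Then $\pi_1(\alpha)\in a^pb^*$, $\pi_1(\zeta)=\eps$, and $\pi_1$ of each of $\beta,\gamma,\delta$ lies in $b^*$; dually $\pi_2(\alpha)=\eps$, $\pi_2(\zeta)\in a^*b^p$, and $\pi_2$ of each of $\beta,\gamma,\delta$ lies in $a^*$. Hence $\pi_1(\alpha\beta^2\gamma\delta^2\zeta)=a^pb^{\,p+|\pi_1(\beta)|+|\pi_1(\delta)|}$ and $\pi_2(\alpha\beta^2\gamma\delta^2\zeta)=a^{\,p+|\pi_2(\beta)|+|\pi_2(\delta)|}b^p$, which coincide only if $|\pi_1(\beta)|+|\pi_1(\delta)|+|\pi_2(\beta)|+|\pi_2(\delta)|=|\beta\delta|=0$, contradicting $|\beta\delta|\ge1$. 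In all three cases $\alpha\beta^2\gamma\delta^2\zeta\notin L_{\{a,b\}}$, contradicting Lemma~\ref{PLCF}, so $L_{\{a,b\}}$, and therefore $L_\Al$, is not context-free.

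The main obstacle is the third case: one must use the bound $|\beta\gamma\delta|\le p$ to confine the entire pumped window to the single boundary between $b^p$ and $\ov{a}^p$, so that after pumping the unbarred projection stays of the form $a^pb^*$ and the barred projection stays of the form $a^*b^p$; the two exponents that can move then live in ``opposite'' coordinates, which is precisely what forbids $\pi_1=\pi_2$. The first two cases are mere length counts, and both the reduction to $|\Al|=2$ and the morphism description of $L_\Al$ are routine.
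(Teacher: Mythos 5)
Your proof is correct, but it takes a genuinely different route from the paper. The paper argues via closure properties and a known non-context-free language: it intersects $L_\Al$ with the regular set $\Al^*\ov{\Al}^*$ to obtain the marked copy language $\{w\ov{w}\mid w\in\Al^*\}$, and then applies the bar-erasing morphism to get the copy language $C_{\Al^*}$, which is cited as well known not to be context-free; since CF-languages are closed under intersection with regular sets and under morphisms, $L_\Al$ cannot be context-free. You instead give a self-contained argument: after the (correct) reformulation $L_\Al=\{u\mid\pi_1(u)=\pi_2(u)\}$ and the routine restriction to a binary alphabet, you pump the word $a^pb^p\ov{a}^p\ov{b}^p$ directly, and your case analysis is sound — in particular the key observation that the window $\beta\gamma\delta$ of length at most $p$ containing both barred and unbarred letters must sit inside $b^p\ov{a}^p$, forcing the two pumped exponents into opposite projections, is exactly what is needed. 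The paper's proof is shorter and reuses the standard fact about $C_{\Al^*}$ (itself usually proved by a pumping argument much like yours), while your version has the advantage of being elementary and independent of that external fact, at the cost of the explicit case analysis.
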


\begin{proof}
Indeed, let $L=L_\Al \cap \Al^*\ov{\Al}^* = \{w\ov{w} \mid w \in \Al^*\}$.
If $L_\Al$ were a CF-language, then $L$ would be a CF-language as $\Al^*\ov{\Al}^*$ is regular.
However, the copy language $C_{\Al^*}$ is a morphic image of $L$, and as the CF-languages are closed under morphic images, that would make $C_{\Al^*}$ a CF-language; a contradiction.\vspace*{-1mm}
\end{proof}

Engelfriet and Rozenberg~\cite[Theorem 15]{EngRoz} showed in 1980 that each recursively enumerable
language $K$ can be represented in the form
$K=h(L_\Sigma \cap M)$, where $h$ is a letter-to-letter morphism, $\Al$ a binary alphabet and
$M$ a regular language. It follows, as stated in~\cite{EngRoz}, that
it is undecidable for regular languages $R \subseteq \Al^*$ if  $R$ contains
a word from $w \shuffle \ov{w}$ for some $w \in \Al^*$;
see also~\cite{HHS} for a direct proof of this along different lines.

We give a simple proof of the result based on reduction to the Post Correspondence Problem.
The proof below does not apply to small alphabets as the PCP is known to be undecidable only for alphabet sizes of at least five.


\begin{theorem}\label{Reg:marked}
It is undecidable for regular languages $R$ if $R$ contains an element of
$w \shuffle \ov{w}$ for some~$w$.
\end{theorem}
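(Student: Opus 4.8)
The plan is to reduce PCP to the problem. Let $(g,h)$ be an instance of the PCP with $g,h\colon \Al^*\to\Delta^*$ non-erasing morphisms, where $\Al=\{a_1,\dots,a_n\}$. The idea is to build a regular language $R$ so that a word lies in $w\shuffle\ov w$ for some $w$ if and only if $(g,h)$ has a solution, by arranging that the ``$w$-part'' of any such word encodes a candidate solution $v\in\Al^+$ together with the value $g(v)$, while the ``$\ov w$-part'' (the marked copy) is forced, via the shuffle, to simultaneously encode $v$ and $h(v)$; equality $g(v)=h(v)$ is exactly what makes the two halves compatible.

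Concretely, I would work over the alphabet $\Al\cup\Delta$ on the unmarked side and its marked copy on the other side, and design $R\subseteq(\Al\cup\Delta\cup\ov\Al\cup\ov\Delta)^*$ as a finite union/concatenation of simple regular pieces that pin down the \emph{positions} of letters in any element of $w\shuffle\ov w\cap R$. The crucial trick is that the shuffle of $w$ with $\ov w$ is very rigid once $R$ constrains the letter types in each region: if $R$ demands that the word have the shape (block of symbols from $\Al$)(block from $\ov\Delta$)(block from $\Delta$)(block from $\ov\Al$), say, then writing $w=w'w''$ one sees $w'\in\Al^*$ must equal the word read off the first block, $\ov{w'}$ sits in the $\ov\Al$-block so the $\Al$-content of $w$ is forced to be read backwards or forwards there, $w''\in\Delta^*$ is the $g$-image, and $\ov{w''}$ in the $\ov\Delta$-block must then be the $h$-image — a match is possible precisely when $g(v)=h(v)$ for the common preimage $v$. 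I would use the reversal freedom of the shuffle (one half can appear in reversed reading order within a block) to line the two descriptions up, exactly as in the standard PCP-to-palindrome/marked-copy reductions. Since $R$ is a Boolean combination of concatenations of letter-class languages and of finitely many ``the $i$-th symbol determines the next block'' conditions, $R$ is regular and effectively constructible from $(g,h)$.

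The main obstacle — and the step deserving the most care — is making the encoding genuinely \emph{rigid}: one must ensure that \emph{every} word in $w\shuffle\ov w\cap R$ really does decode to a bona fide PCP solution, ruling out spurious matches where the shuffle interleaves the two copies in some unintended way (e.g. splitting $w$ across the designated blocks inconsistently, or using the emptiness-of-factors freedom in the shuffle definition to cheat). The usual remedy is to insert separator letters from a small auxiliary alphabet, both unmarked and marked, so that the block structure of $R$ forces each $u_i,v_i$ in the shuffle factorization to respect the intended boundaries; then a short case analysis on where each separator and its marked twin can land shows the decomposition is unique. I would also double-check the degenerate case $w=\eps$ (excluded by insisting the solution be non-empty) and confirm the construction only needs the alphabet to have size at least two, consistent with the remark preceding the theorem that this argument does not settle the small-alphabet cases because PCP itself is open there. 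Once rigidity is established, correctness in both directions is immediate, and undecidability follows from the undecidability of PCP.
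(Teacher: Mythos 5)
Your high-level plan (reduce from PCP, use the shuffle with the marked copy to force the two halves of the encoding to agree) matches the paper's intent, but the concrete construction you sketch does not work, for two reasons. First, there is no ``reversal freedom'' in the shuffle: by definition every element of $w\shuffle \ov{w}$ is an interleaving $u_1v_1\cdots u_nv_n$ with $w=u_1\cdots u_n$ and $\ov{w}=v_1\cdots v_n$, so both $w$ and $\ov{w}$ appear in their original reading order. Consequently, if you force $w=v\,x$ with $v\in\Al^+$ and $x\in\Delta^+$, then $\ov{w}=\ov{v}\,\ov{x}$ must contribute its $\ov\Al$-letters \emph{before} its $\ov\Delta$-letters, and your block pattern $\Al^*\ov{\Delta}^*\Delta^*\ov{\Al}^*$ is unsatisfiable for nonempty $v$. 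Second, and more fundamentally: $\ov{w}$ is completely determined by $w$, so if $w$ encodes $v$ and $g(v)$, then $\ov{w}$ encodes $\ov{v}$ and $\ov{g(v)}$ --- it cannot independently ``encode $h(v)$.'' To make the reduction work, your regular language $R$ would itself have to verify that the $\ov{\Delta}$-block equals $\ov{h(v)}$ for the specific $v$ appearing in a disjoint, non-interleaved $\Al$-block; a finite automaton cannot do this, since $v$ is unbounded and has been fully consumed before the $\ov{\Delta}$-block begins (this is essentially the non-regular marked-copy language). No choice of separator letters repairs this.

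The paper's proof avoids both problems by keeping $w$ over $\Al$ alone and exploiting the interleaving itself: elements of $L(A)$ have the form $a_1\ov{z_1}a_2\ov{z_2}\cdots a_n\ov{z_n}$ with $a_i\in\Al$, $z_i\in\Al^*$, and the automaton checks $g(a_1\cdots a_n)=h(z_1\cdots z_n)$ \emph{incrementally}, its states being the finitely many overflows (prefixes of the words $h(a)$) that arise when $g$ leads $h$. The equality check thus needs only bounded memory because the two images are compared in lockstep, while membership in $w\shuffle\ov{w}$ is exactly what forces $a_1\cdots a_n=z_1\cdots z_n$, i.e.\ a genuine PCP solution. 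The lesson is that the comparison of $g$- and $h$-images must be delegated to the finite automaton via interleaved, overflow-bounded bookkeeping, not to the marked copy via separated blocks.
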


\begin{proof}
Let $(g, h)$ be an instance of the PCP for $g, h\colon \Al^* \to \Delta^*$, where $g$ and $h$ are both non-erasing.
Define a (generalized) finite automaton $A$ with the set of states
\[
Q = \{ v \mid  v \ \text{ is a prefix of $h(a)$, $ a \in \Al$} \}.
\]
We set the state $q_f=\eps$ to be the initial and the unique final state.
The transitions of $A$ are of the form: for $a \in \Al$ and $z \in \Al^+$
\begin{align*}
&u \xrightarrow{ \ a\ov{z} \ } v \ \text{ if there exists } x\in \Al^* \text{ and }v\in Q \text{ such that }g(a)=xv \text{ and $h(z)=ux$}, \\
&u \xrightarrow{ \ a \ } vg(a), \ \text{ if } vg(a)\in Q.
\end{align*}
Note that the words $z$ in the above can be found algorithmically, as the images $g(a)$ are of finite length and $h$ is non-erasing.

\medskip
The states $w\in Q$ correspond to \emph{overflows} of the instance $(g,h)$
when $g$ is leading:
an overflow $w \in Q$ occurs in the situation, where
$g(u)=h(v)w$ for some words $u$ and $v$. Indeed, if the automaton $A$ is in state $w$ after reading the word $v$, then necessarily $v=a_1\ov{z_1}a_2\ov{z_2}\cdots a_n\ov{z_n}$, where $a_i\in \Al$ and $z_i\in \Al^*$ (note that $z_i$ is empty for the transitions of the latter form) for all $i=1,\dots, n$, and
$$
g(a_1)g(a_2)\cdots g(a_n) = h(z_1)h(z_2)\cdots h(z_n)w.
$$
As $q_f=\eps$, we have that $A$ accepts the language
\begin{align*}
L(A) = \{a_1\ov{z_1} a_2 \ov{z_2} \cdots a_n\ov{z_n} \mid
& \ n \ge 0, \  a_i \in \Al, \ z_i \in \Al^*, \\
& \ g(a_1a_2 \cdots a_n)
=h(z_1z_2 \cdots z_n) \}.
\end{align*}
Therefore,  $(w \shuffle \ov{w}) \cap L(A) \ne \emptyset$ for some $w$ if and only if
there exists a word
$a_1\ov{z_1} a_2 \ov{z_2} \cdots a_n\ov{z_n}$ in $L(A)$ such that
$w=a_1a_2 \cdots a_n=z_1z_2 \cdots z_n$.
This is equivalent to saying that the instance $(g,h)$ of the PCP
has a solution.
\end{proof}

The self-shuffle and the shuffle with reverse are left as open problems.

 \begin{problem}
 Is it decidable for regular languages $R$ if $R$ contains an element of
$w\shuffle w$ for some $w$?
 \end{problem}

 \begin{problem}
 Is it decidable for regular languages $R$ if $R$ contains an element of
$w\shuffle w^r$ for some $w$?
 \end{problem}

\EXTRA{%
Let $\varphi\colon (\Al \cup \ov{\Al})^*  \to \Al^*$ be a morphism
defined by $\varphi(a) = a = \varphi(\bar{a})$. 	
Now, for $R \subseteq \Al^*$, we have $(w \shuffle w) \cap R \ne \emptyset$ if and only if $(w \shuffle \ov{w}) \cap \varphi^{-1}(R)
\ne \emptyset$.
However, not every (regular) language
$L \subseteq (\Al \cup \ov{\Al})^*$
is of the form $L=\varphi^{-1}(L_0)$ for some $L_0 \subseteq \Al^*$;
so that Theorem~\ref{Reg:marked} does not give any solution to the problem.
}%

The problems with shuffles of words tend to be algorithmically difficult.
Indeed, the following was shown by Biegler and McQuillan~\cite{Biegler:2014}.

\begin{theorem}
Consider an instance consisting of a DFA $A$ and two words $u,v \in \Al^*$ with $|\Al| \ge 2$.
It is NP-complete to determine if there exists a word $w \in L(A)$ such that
$w \notin u \shuffle v$.
\end{theorem}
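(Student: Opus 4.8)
The statement to prove is the NP-completeness, due to Biegler and McQuillan, of deciding for an instance $(A,u,v)$ — where $A$ is a DFA over $\Al$ with $|\Al|\ge 2$ and $u,v\in\Al^*$ — whether there exists $w\in L(A)$ with $w\notin u\shuffle v$. The plan has two halves: membership in NP and NP-hardness.

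\textbf{Membership in NP.} The plan is to observe that $u\shuffle v$ is a regular language recognized by a standard ``grid'' automaton $B$ of size $O(|u|\cdot|v|)$ whose states are pairs $(i,j)$ with $0\le i\le|u|$, $0\le j\le|v|$, tracking how much of $u$ and of $v$ has been emitted so far. Hence $\Al^*\setminus(u\shuffle v)$ is recognized by a DFA $B'$ of the same polynomial size (complement $B$ after the usual completion). We then want a word in $L(A)\cap L(B')$, i.e. nonemptiness of the product automaton $A\times B'$, which is a polynomial-size object. A shortest such word, if one exists, has length bounded by the number of states of $A\times B'$, hence polynomial in the input; so a nondeterministic machine can guess $w$ symbol by symbol, simulating $A$ and $B'$ in tandem, and verify $w\in L(A)$ and $w\notin u\shuffle v$ in polynomial time. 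This gives the NP upper bound.

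\textbf{NP-hardness.} The plan is to reduce from a standard NP-complete problem. A natural choice is to reduce directly from nonemptiness of $L(A)\setminus(u\shuffle v)$ being ``hard to avoid'': concretely, one reduces from a problem like \textsc{3-SAT} or, more smoothly, from the problem of deciding whether $u\shuffle v$ fails to contain some specific string — but the cleanest route is to pick an NP-hard problem about shuffles already in the literature, namely whether a given word $w$ lies in $u_1\shuffle u_2\shuffle\cdots\shuffle u_k$ (the shuffle of several words), which is NP-complete (Warmuth–Haussler), and encode it. The idea: given words $w$ and $u_1,\dots,u_k$, one builds a DFA $A$ accepting exactly $\{w\}$ together with a ``padding'' language, and chooses $u,v$ so that $w\in u\shuffle v$ iff $w$ is \emph{not} a shuffle of the $u_i$'s — so that ``$\exists$ a word of $L(A)$ outside $u\shuffle v$'' captures the positive instance. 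Making this encoding work with only two binary words $u,v$ (rather than $k$ of them) is the crux: one folds the multi-word shuffle into a two-word shuffle by interleaving with fresh separator letters and exploiting that $|\Al|\ge2$, so that the combinatorial freedom of a single binary shuffle $u\shuffle v$ simulates the choice of how to split $w$ among the $u_i$. Alternatively, and perhaps more robustly, reduce from \textsc{3-SAT} directly: let $u$ encode a sequence of variable-choices and $v$ a consistency gadget, let $A$ check clause satisfaction, and arrange that an accepted word avoiding $u\shuffle v$ corresponds to a satisfying assignment.

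\textbf{Main obstacle.} The hard part is the NP-hardness encoding, specifically compressing the inherently $k$-fold combinatorial choice (which clause-literal to satisfy, or how to distribute $w$'s letters) into the single interleaving degree of freedom of $u\shuffle v$ with $u,v$ over a two-letter alphabet, while keeping $A$ a genuine DFA (not an NFA) of polynomial size. The complementation ``$w\notin u\shuffle v$'' must be used essentially — it turns an existential shuffle-membership question into its negation — so the gadget must be designed so that the \emph{unique} ways to avoid $u\shuffle v$ are exactly the encodings of solutions. Verifying that no spurious avoiding word sneaks through (completeness and soundness of the reduction) is where the real work lies; the NP-membership direction and the polynomial-size grid automaton for $u\shuffle v$ are routine.
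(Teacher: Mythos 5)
The paper does not actually prove this statement; it is quoted from Biegler and McQuillan~\cite{Biegler:2014}, so there is no in-paper argument to compare against. Judged on its own, your proposal has a concrete flaw in the NP-membership half and leaves the hardness half unexecuted. The ``grid'' automaton $B$ for $u\shuffle v$ with states $(i,j)$ is a \emph{nondeterministic} automaton: whenever the next letters $u_{i+1}$ and $v_{j+1}$ coincide (which is unavoidable over a binary alphabet), the state $(i,j)$ has two outgoing transitions on the same letter. You therefore cannot obtain a polynomial-size DFA $B'$ for the complement by ``the usual completion''; determinization may blow up exponentially. Indeed, if a polynomial-size DFA for $\Al^*\setminus(u\shuffle v)$ existed, the whole problem would reduce to emptiness of a product of two polynomial-size DFAs and hence lie in P, contradicting the NP-hardness you are simultaneously claiming (unless P $=$ NP). The NP upper bound needs a different observation: every word of $u\shuffle v$ has length exactly $|u|+|v|$, so either $L(A)$ contains a word of some other length (this is decidable in polynomial time, and such a word is automatically a witness), or every candidate witness has length exactly $|u|+|v|$; in the latter case one guesses $w$ of that polynomial length and verifies $w\in L(A)$ directly and $w\notin u\shuffle v$ by the $O(|u|\cdot|v|)$ dynamic program on the grid.

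For NP-hardness you offer two candidate starting points (multi-word shuffle membership \`a la Warmuth--Haussler, or \textsc{3-SAT}) but no gadget, and you explicitly defer ``the crux'' of compressing a $k$-fold choice into a single two-word shuffle over a binary alphabet. That deferred step is precisely the content of the theorem; as written, the hardness direction is a plan, not a proof. To make this rigorous you would need to construct the DFA $A$ and the words $u,v$ explicitly and prove both directions of the reduction, which is the delicate part of Biegler and McQuillan's argument.
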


\begin{problem}
Does the above problem stay NP-complete if an instance consists of $A$ and a single word $u$,
and the problem is to determine if there exists a word $w$ with $w \notin u \shuffle u$?
\end{problem}

\section{Linear CF-languages}

In this section we study the problems defined in the introduction for linear CF-languages and show that they are all undecidable.

\subsection{Powers and copies}

Let $(g, h)$ be an instance of the PCP, where $g,h\colon \Gamma^* \to \Delta^*$ with $\Gamma \cap \Delta = \emptyset$. Define the language
\[
L_2(g,h) = \{z u^r x w^r  :  u, w \in \Gamma^+, \ z,x \in \Delta^*, \ z=g(w), \ x=h(u)\}.
\]
It is an easy exercise of language theory to show that $L_2(g,h)$ is accepted by a deterministic linear pushdown automaton. Indeed, the reversal happens between $u^r$ and $x$ as the automaton can recognize the alternation in the
disjoint alphabets.
Checking if $z=g(w)$ and $x=h(u)$ can be easily performed deterministically
while decreasing the stack after the reversal. Therefore, $L_2(g,h)$ is a linear CF-language.

\begin{theorem}\label{Lin:copy}
It is undecidable for deterministic linear CF-languages $L$ if
$ww \in L$ for some~$w$.
\end{theorem}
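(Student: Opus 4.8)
The plan is to reduce the PCP to the square-membership problem for the deterministic linear CF-language $L_2(g,h)$ defined just above the statement. Recall $L_2(g,h) = \{z u^r x w^r : u,w \in \Gamma^+,\ z,x \in \Delta^*,\ z = g(w),\ x = h(u)\}$, with $\Gamma \cap \Delta = \emptyset$. First I would analyze what a square $ww \in L_2(g,h)$ can look like. A word in $L_2(g,h)$ has the shape (block over $\Delta$)(block over $\Gamma$)(block over $\Delta$)(block over $\Gamma$), i.e. its letters come in four maximal monochromatic runs with alphabets $\Delta,\Gamma,\Delta,\Gamma$ in that order (each $\Gamma$-run nonempty; the $\Delta$-runs may a priori be empty, though since $g,h$ are non-erasing and $u,w\in\Gamma^+$ they are in fact nonempty). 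So if $ww$ is such a word, then $w$ itself, read twice, must produce exactly this $\Delta\Gamma\Delta\Gamma$ pattern. The key combinatorial step is to show this forces $w = g(w_0)\, u_0^r$ for a single $w_0 = u_0 \in \Gamma^+$ and hence that $z = g(w_0)$, $u = u_0$, $x = h(u_0) = g(w_0) = z$, $w^r$-block $= w_0^r = u_0^r$; running through the constraints $z = g(w)$-block and $x = h(u)$-block then yields $g(w_0) = h(u_0)$, i.e. $w_0$ is a PCP solution.

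More concretely, I would argue as follows. Suppose $WW \in L_2(g,h)$ where I write $W$ for the repeated word to avoid clashing with the $w$ in the definition. Since $WW$ decomposes as $z\,u^r\,x\,w^r$ with $z,x \in \Delta^*$ and $u^r, w^r \in \Gamma^+$, the letter at position $|W|+1$ (start of the second copy) sits somewhere inside this four-block pattern, and the first letter of $W$ lies in $\Delta \cup \{\text{start of } u^r\}$; matching up the alphabet-pattern of $W$ against itself, the only consistent possibility is that $W = z\,u^r$ with $z \in \Delta^*$, $u \in \Gamma^+$, and then $WW = z\,u^r\,z\,u^r$, so comparing with $z\,u^r\,x\,w^r$ we get $x = z$ and $w = u$. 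The defining conditions $z = g(w)$ and $x = h(u)$ then become $z = g(u)$ and $z = h(u)$, hence $g(u) = h(u)$ with $u \in \Gamma^+$ — a PCP solution. Conversely, any PCP solution $u$ gives the square $(g(u)u^r)(g(u)u^r) \in L_2(g,h)$. This establishes that $(g,h)$ has a solution iff $L_2(g,h)$ contains a square, and since $L_2(g,h)$ is effectively a deterministic linear CF-language and PCP is undecidable, the theorem follows.

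The main obstacle I anticipate is the combinatorial case analysis pinning down the shape of $W$: a priori the split point $|W|+1$ could fall in any of the four blocks, and one must rule out, e.g., $W$ starting partway through a $\Gamma$-run or $W$ spanning a block boundary, using the disjointness $\Gamma \cap \Delta = \emptyset$ and the fact that $WW$ has exactly the run-pattern $\Delta\,\Gamma\,\Delta\,\Gamma$ (four maximal monochromatic runs, not more). One needs the observation that $W$, read once, already contains at least one full color change (since it is half of a word with three color changes), which constrains things enough; a little care is needed about whether the $\Delta$-runs can be empty, but non-erasingness of $g,h$ handles that. I would also remark, paralleling Theorem~\ref{Reg:mirror-n} and the surrounding discussion, that the same construction gives undecidability of the reverse-copy problem for linear CF-languages (cf. Theorem~\ref{Lin:revcopy}) with only cosmetic changes — indeed $L_2(g,h)$ is designed with the reversals $u^r, w^r$ precisely so that squares and reverse-copies are detected by essentially the same alphabet-pattern argument.
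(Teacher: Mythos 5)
Your proposal is correct and follows essentially the same route as the paper: a reduction from the PCP using the language $L_2(g,h)$, where the disjointness of $\Gamma$ and $\Delta$ forces the midpoint of any square to fall exactly at the $u^r\,|\,h(u)$ boundary; you merely spell out the block-pattern analysis that the paper compresses into ``obviously.'' One caveat on your closing aside: $L_2(g,h)$ itself contains no word of the form $WW^r$ at all (its first letter is in $\Delta$ while its last letter is in $\Gamma$), so the reverse-copy result genuinely needs a differently shaped language, which is what the paper's Theorem~\ref{Lin:revcopy} uses.
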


\begin{proof}
Suppose $g,h \colon \Gamma^* \to \Delta^*$ are morphisms where
$\Gamma$ and $\Delta$ are disjoint,
and let $\Sigma= \Gamma \cup \Delta$.
We rewrite the above language in a more convenient form:
\begin{equation}\label{eq:lin}
L_2(g, h) = \{g(w) u^r h(u) w^r  :  u, w \in \Gamma^+\}.
\end{equation}
Obviously, there is a square in $L_2(g,h)$  if and only if $g(w)u^r=h(u)w^r$ for some non-empty words $u$ and $w$, that is, if and only if the instance $(g,h)$ of the PCP has a nonempty
solution: $g(w)=h(u)$ and $w=u$.

\medskip
Since $L_2(g, h)$ is a deterministic linear CF-language, the claim follows.
\end{proof}

Since the deterministic linear CF-languages are closed under taking complements, we have, corresponding to Theorem~\ref{Reg:inclusion}, the following corollary.

\begin{corollary}
It is undecidable for deterministic linear CF-languages
$L$ if all squares are in $L$, i.e.,
if $C_{\Al^*} \subseteq L$.
\end{corollary}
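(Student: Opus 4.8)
The plan is to mirror the proof of Theorem~\ref{Reg:inclusion}, using the complementation closure of deterministic linear CF-languages in place of that of regular languages. First I would recall that by a classical result (Harrison, or see \cite{Sa-73}), the family of deterministic linear CF-languages is effectively closed under complementation: given a deterministic linear PDA for $L$, one can construct a deterministic linear PDA for $L^c = \Al^*\setminus L$.

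Next I would observe the key equivalence: $C_{\Al^*}\subseteq L$ fails if and only if some square $ww$ lies in the complement $L^c$; that is,
\[
C_{\Al^*}\nsubseteq L \iff \exists\, w:\ ww \in L^c .
\]
Now apply this to the particular languages produced in the proof of Theorem~\ref{Lin:copy}. For an instance $(g,h)$ of the PCP, set $L := \bigl(L_2(g,h)\bigr)^c$. Since $L_2(g,h)$ is a deterministic linear CF-language, so is $L$. By the displayed equivalence together with the analysis in the proof of Theorem~\ref{Lin:copy}, we get that $C_{\Al^*}\subseteq L$ holds if and only if $L_2(g,h)$ contains no square, which happens exactly when $(g,h)$ has no nonempty solution. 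Thus a decision procedure for the property ``$C_{\Al^*}\subseteq L$'' for deterministic linear CF-languages would decide (non)solvability of the PCP, which is impossible.

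One small wrinkle: the PCP is undecidable already for instances with a nonempty solution iff it has a solution at all, so the restriction to nonempty solutions in Theorem~\ref{Lin:copy} causes no loss; alternatively one notes that the empty word is never a genuine solution for non-erasing morphisms on a nonempty domain alphabet only if one insists on nonemptiness, and the standard PCP reduction can be taken over instances where solvability is equivalent to existence of a nonempty solution. The main (and only real) obstacle is to have the effective complementation of deterministic linear CF-languages available; everything else is a direct transcription of the regular-language argument. I would therefore state the corollary's proof in two lines: invoke closure under complement, then reduce to Theorem~\ref{Lin:copy} exactly as Theorem~\ref{Reg:inclusion} reduces to Theorem~\ref{Reg:copy}.
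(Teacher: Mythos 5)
Your proposal is correct and is essentially the paper's own argument: the paper likewise invokes closure of deterministic linear CF-languages under complementation (stated just before the corollary) and then negates Theorem~\ref{Lin:copy} via the equivalence $\forall w:\, ww\in L \iff \neg\exists w:\, ww\in L^c$. Your version merely spells out the reduction through $L=\bigl(L_2(g,h)\bigr)^c$ a bit more explicitly; the digression about nonempty solutions is harmless and not needed.
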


\begin{proof}
We consider the negation of the proposition in Theorem~\ref{Lin:copy}:
\[
\neg \exists w: ww \in L
\iff
\forall w:  ww \notin L
\iff
\forall w: ww \in L^c.
\]

\vspace*{-8mm}
\end{proof}

For the marked copy, we transform the language $L_2(g,h)$ in \eqref{eq:lin} into the language
\[
\bar{L}_2(g, h) = \{g(w) u^r \ov{h(u)} \ov{w}^r  :  u, w \in \Gamma^*\}
\]
which is clearly also a deterministic linear CF-language.

\begin{theorem}\label{Lin:marked}
It is undecidable for deterministic linear CF-languages $L$, if
$w\ov{w} \in L$ for some word~$w$.	
\end{theorem}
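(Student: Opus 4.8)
The plan is to mimic the proof of Theorem~\ref{Lin:copy}, replacing the plain square test by the marked-copy test and using the language $\bar L_2(g,h)$ introduced just above the statement. First I would recall that $\bar L_2(g,h)=\{g(w)\,u^r\,\ov{h(u)}\,\ov{w}^r : u,w\in\Gamma^*\}$ is a deterministic linear CF-language: the only stack reversal occurs at the boundary between $u^r$ and $\ov{h(u)}$, which the automaton detects because the alphabets $\Gamma$, $\Delta$, $\bar\Gamma$, $\bar\Delta$ are pairwise disjoint, and after the reversal it deterministically verifies $g(w)$ against the popped symbols and $h(u)$ against the marked part. So the only real content is the reduction itself.

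Next I would analyze when a word of the form $v\ov{v}$ lies in $\bar L_2(g,h)$. A word in $\bar L_2(g,h)$ has the shape $g(w)u^r\ov{h(u)}\,\ov{w}^r$; its unmarked prefix is exactly $g(w)u^r$ (a word over $\Delta\cup\Gamma$) and its marked suffix is $\ov{h(u)}\,\ov{w}^r=\ov{h(u)\,w^r}$. For this to equal $v\ov v$ for some $v$ we need, reading off the marking boundary, that $v=g(w)u^r$ and $\ov v=\ov{h(u)w^r}$, hence $v=h(u)w^r$ as well. Therefore $v\ov v\in\bar L_2(g,h)$ for some $v$ if and only if there exist $u,w\in\Gamma^*$ with $g(w)u^r=h(u)w^r$. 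Because $g(w),h(u)\in\Delta^*$ while $u^r,w^r\in\Gamma^*$ and these alphabets are disjoint, this single word equation splits into $g(w)=h(u)$ and $u^r=w^r$, i.e.\ $u=w$ and $g(w)=h(w)$. Thus a marked copy occurs in $\bar L_2(g,h)$ exactly when the PCP instance $(g,h)$ has a solution $w$ (here even $w=\eps$ is allowed, giving the trivial solution, so strictly speaking one should either use $\Gamma^+$ as in $L_2$, or observe the empty solution is harmless and restrict to PCP instances without the trivial solution in the usual way).

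I would then conclude: since the PCP is undecidable (for morphisms into a disjoint target alphabet, which is obtained from an arbitrary instance by renaming letters), and since $\bar L_2(g,h)$ is uniformly a deterministic linear CF-language constructible from $(g,h)$, it is undecidable whether a given deterministic linear CF-language $L$ contains a word $w\ov w$. A small wrinkle to address is the $u,w\in\Gamma^*$ versus $\Gamma^+$ discrepancy and the always-present empty solution; the clean fix, matching the treatment in Theorem~\ref{Lin:copy}, is to define $\bar L_2(g,h)$ with $u,w\in\Gamma^+$ so that a marked copy in it corresponds to a \emph{nonempty} PCP solution, which is still undecidable.

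The main obstacle, such as it is, is purely the bookkeeping of the word equation $g(w)u^r=h(u)w^r$: one must be careful that the split into $g(w)=h(u)$ and $u=w$ really is forced and does not permit some "crossover" factorization. This is exactly where disjointness of $\Gamma$ and $\Delta$ is essential — the prefix of $v$ in $\Delta^*$ must match up with the prefix $g(w)$ of length $|g(w)|$, and the lengths are not a priori equal on the two sides, so one argues that the first letter of $u^r$ lies in $\Gamma$ whereas $h(u)$ continues in $\Delta$, pinning down $|g(w)|=|h(u)|$ and hence the factorization. Once that is spelled out the rest is immediate from the decidability results already cited for PCP and the closure properties of deterministic linear CF-languages.
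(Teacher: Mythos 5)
Your proposal is correct and follows essentially the same route as the paper: it uses the language $\bar{L}_2(g,h)$, observes that the disjointness of $\Gamma$ and $\Delta$ forces the equation $g(w)u^r=h(u)w^r$ to split into $g(w)=h(u)$ and $u=w$, and concludes by undecidability of the PCP. Your remark about replacing $\Gamma^*$ by $\Gamma^+$ to exclude the trivial empty solution is a valid correction of a small slip in the paper's definition of $\bar{L}_2(g,h)$.
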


\begin{proof}
As in the proof of Theorem~\ref{Lin:copy}, because the alphabets $\Delta$ and $\Gamma$ are disjoint (therefore, so are $\ov{\Delta}$ and $\ov{\Gamma}$)
we obtain that there is a word of the form $w\ov{w}$ in $\bar{L}_2(g, h)$ if and only if $g(w)u^r=h(u)w^r$ which again is equivalent to saying
that the instance $(g,h)$ of the PCP has a solution.
\end{proof}

\begin{example}
The language
\[
F = \{w \in \Sigma^* \mid w= uxxv \ \text{ for some nonempty word } x \}
\]
is not regular. Indeed, by the Pumping lemma, its complement,
the language of all square-free words,  is not even context-free. For this, assume contrary that $F^c$ is a CF-language. As it is infinite, pumping in
Lemma~\ref{PLCF},
implies that for every sufficiently long word $w \in F^c$ contains a factor of
the form $xx$ for some nonempty word $x$. Therefore, there is a word of the form $uxxv$ in $F^c$; a contradiction.
\end{example}

We extend Theorem~\ref{Lin:copy} for arbitrary powers $n\ge 2$ as follows.

\begin{theorem}\label{Lin:pow}
Let $n \ge 2$ be a fixed integer.
It is undecidable for deterministic linear CF-languages $L \subseteq \Al^*$
if there exists a power $w^n \in L$ for some
nonempty $w \in \Al^*$.
\end{theorem}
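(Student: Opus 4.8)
The plan is to mimic the construction behind Theorem~\ref{Lin:copy}, but to fold $n$ copies of the PCP‐encoding into a single linear language so that an $n$th power in $L$ encodes a solution. First I would recall the key feature of the language $L_2(g,h)$ in \eqref{eq:lin}: it has the shape ``$\Delta$-block, $\Gamma^r$-block, $\Delta$-block, $\Gamma^r$-block'' and a square forces the two $\Delta$-parts and the two $\Gamma^r$-parts to match, yielding $g(w)=h(u)$ and $w=u$. For general $n$ I would design a deterministic linear CF-language $L_n(g,h)$ whose words have the form
\[
g(w)\,u^r\,\$^{\,n-2}\,h(u)\,w^r
\]
where $\$$ is a fresh separator symbol not in $\Al$, arranged so that an $n$th power is possible only when the $\$^{\,n-2}$ block is ``shared out'' evenly; concretely one wants the unique way to write a word of $L_n$ as $v^n$ to force $|v|$ to cut the word into $n$ equal pieces, the middle $n-2$ of which are exactly $\$$, and the first and last of which carry the PCP data. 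The point is that $L_n(g,h)$ should still be recognized by a one-reversal deterministic PDA: it reads $g(w)$ (over $\Delta$) onto the stack, then pushes $u^r$ (over $\Gamma$), detects the alphabet change, then reads the $\$$-block without touching the stack, then pops while verifying $h(u)$ and then $w^r$ against the stack contents — the disjointness of $\Gamma$ and $\Delta$ and the marker $\$$ make all these phase transitions deterministic.

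The core combinatorial step is to choose the exact format so that ``$\exists v\neq\eps:\ v^n\in L_n(g,h)$'' is equivalent to ``$(g,h)$ has a nonempty solution''. For the forward direction, a nonempty solution $w$ with $g(w)=h(w)$ gives the word $g(w)\,w^r\,\$\,\$\cdots\$\,h(w)\,w^r$, and I would arrange the number of $\$$'s and the placement so that this is literally $\bigl(g(w)\,\$^{\,?}\bigr)$-style $n$th power — the cleanest choice is probably to insert one separator of a controlled length between every pair of consecutive factors, e.g. take words $\$^{k}\,g(w)\,u^r\,\$^{k}\,h(u)\,w^r\,\$^{k}$ padded so that the whole word splits into $n$ identical blocks; working out the right padding lengths in terms of $|g(w)|,|w|$ is the routine-but-fiddly part. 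For the converse, I would argue that if $v^n\in L_n(g,h)$ then by looking at where the (unique, because of the fresh symbols and alphabet disjointness) occurrences of $\$$ sit, $v$ is forced to begin a block carrying a $\Delta$-word and $\Gamma^r$-word, the internal blocks are forced to be pure $\$$-blocks, and the periodicity $v^n$ then equates the first and last data blocks, giving $g(w)u^r=h(u)w^r$ over the disjoint alphabets, hence $g(w)=h(u)$ and $w=u$, i.e.\ a solution.

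The main obstacle I anticipate is getting the separator/padding bookkeeping exactly right so that (i) membership is genuinely checkable by a \emph{deterministic} one-reversal PDA, and (ii) an $n$th power is possible \emph{only} in the intended way, with no spurious $n$th powers arising from accidental periodicity among the $\$$'s or from an ``unbalanced'' cut that still happens to land inside the $\$$-region. A safe way to kill spurious solutions is to make the data blocks nonempty (forcing $w\in\Gamma^+$, which matches the PCP convention that solutions are nonempty and morphisms non-erasing) and to flank them with marker symbols so that the first letter of $v$ must be the marker, pinning the only admissible block length; the length-$n$ period then does the rest. Once the format is fixed, the rest is a short verification. Note also that the degenerate case $n=2$ recovers exactly Theorem~\ref{Lin:copy}, so the construction should be presented as a uniform family with $n$ as a parameter, and for $n=2$ one simply takes no $\$$-block.

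\begin{proof}[Proof sketch]
Fix $n\ge 2$ and an instance $(g,h)$ of the PCP with $g,h\colon\Gamma^*\to\Delta^*$ non-erasing and $\Gamma\cap\Delta=\emptyset$. Let $\$$ be a symbol not in $\Gamma\cup\Delta$ and put $\Al=\Gamma\cup\Delta\cup\{\$\}$ (for $n=2$ we do not need $\$$, and the language below is exactly $L_2(g,h)$ of \eqref{eq:lin}). Define
\[
L_n(g,h)=\bigl\{\,g(w)\,u^r\,\$^{\,n-2}\,h(u)\,w^r \ :\ u,w\in\Gamma^{+}\,\bigr\}.
\]
As in the discussion preceding Theorem~\ref{Lin:copy}, a one-reversal PDA reads the prefix over $\Delta$ (pushing $g(w)$), then the block over $\Gamma$ (pushing $u^r$), recognizes the change of alphabet, passes over $\$^{\,n-2}$ without using the stack, and finally pops while checking first $h(u)$ and then $w^r$ against the stack; disjointness of $\Gamma$ and $\Delta$ and the fresh marker $\$$ make every phase transition deterministic, so $L_n(g,h)$ is a deterministic linear CF-language.

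Now suppose $v^n\in L_n(g,h)$ for some nonempty $v$. The word $v^n$ contains exactly $n-2$ occurrences of $\$$, all consecutive; by periodicity of $v^n$ the block of $\$$'s is distributed among the $n$ copies of $v$ so that each internal copy consists only of $\$$'s while the first and last copies carry the $\Delta$- and $\Gamma$-data. Comparing the first and last copies of $v$ forces $g(w)u^r=h(u)w^r$; since $\Gamma$ and $\Delta$ are disjoint this splits as $g(w)=h(u)$ and $w^r=u^r$, i.e.\ $w=u$ is a nonempty solution of $(g,h)$. Conversely, if $w\in\Gamma^+$ is a solution then $g(w)=h(w)$, and the word
\[
g(w)\,w^r\,\$^{\,n-2}\,h(w)\,w^r
\]
lies in $L_n(g,h)$; choosing $v=g(w)\,\$$ when $n\ge 3$ (so that $v^n$ has $n$ equal blocks with the $\$$'s pushed to the appropriate positions), respectively $v=g(w)\,w^r$ when $n=2$, exhibits an $n$th power in $L_n(g,h)$. (The padding in the $n\ge 3$ case is chosen so the block lengths match; this is a routine length computation.) Hence $(g,h)$ has a nonempty solution iff $L_n(g,h)$ contains an $n$th power, and the undecidability of the PCP gives the claim.
\end{proof}
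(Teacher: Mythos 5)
There is a genuine gap, and it is exactly at the spot you flag as ``routine-but-fiddly'': your language
\[
L_n(g,h)=\{\,g(w)\,u^r\,\$^{\,n-2}\,h(u)\,w^r : u,w\in\Gamma^{+}\,\}
\]
contains \emph{no} $n$th powers at all when $n\ge 3$. Every word of $L_n(g,h)$ has exactly $n-2$ occurrences of $\$$, while every $n$th power $v^n$ has a number of occurrences of $\$$ divisible by $n$; since $0<n-2<n$, these are incompatible. No adjustment of ``padding lengths'' rescues this as long as the separator block has $n-2$ letters. The converse direction of your sketch also misreads what $v^n$ means: the $n$ copies of $v$ are literally the same word, so you cannot have ``internal copies consisting only of $\$$'s while the first and last carry the data'' --- if one copy were $\$^{|v|}$ then all of them would be, forcing the whole word into $\$^*$, contradicting $u,w\in\Gamma^+$ and non-erasingness. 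Your proposed witness $v=g(w)\$$ likewise gives $v^n=(g(w)\$)^n$, which has $n$ scattered dollars and no $\Gamma$-letters, so it is not of the required form.

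The paper's construction avoids this by appending not a fixed separator but $n-2$ additional \emph{free data blocks}: $L_n(g,h)=L_2(g,h)\cdot(\Delta^+\Gamma^+)^{n-2}$. A word of this language is a concatenation of $n$ blocks, each of shape (nonempty $\Delta$-word)(nonempty $\Gamma$-word); since $\Gamma\cap\Delta=\emptyset$, the alternation of alphabets pins the period of any $n$th power to exactly one block, so $v^n\in L_n(g,h)$ iff all $n$ blocks are equal, i.e.\ iff $g(w)=h(u)$ and $u=w$, and conversely a PCP solution $w$ yields the $n$th power $\bigl(g(w)w^r\bigr)^n$. Your instinct to reuse $L_2(g,h)$ and to exploit the disjoint alphabets is the right one; the fix is to let the extra $n-2$ blocks range over all of $\Delta^+\Gamma^+$ rather than inserting a marker block, so that the existence of an $n$th power is a genuine constraint that the PCP solution can satisfy.
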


\begin{proof}
Our construction relies on the language $L_2(g,h)$ defined in \eqref{eq:lin}.
Let
\[
L_n(g,h)=  L_2(g,h) \cdot (\#\Delta^+\Gamma^+)^{n-2}.
\]
It is a deterministic linear CF-language since the end portion $(\Delta^+\Gamma^+)^{n-2}$ is regular. Recall that $\Delta\cap\Gamma=\emptyset$.

\medskip
Now,\vspace*{-1mm}
\[
g(w) u^r h(u) w^r\cdot  w_1u_1 \cdots w_{n-2}u_{n-2}\in L_n(g,h)
\]
is an $n$th power if and only if $g(w)=h(u)=w_1 = \ldots =w_{n-2}$ and
$u^r=w^r =u_1 = \ldots =u_{n-2}$, and thus
if and only if the instance $(g,h)$ has a solution. This proves the claim.
\end{proof}

Finally, if we replace $L_n(g,h)$ by the deterministic linear CF-language
\[
L_\omega(g,h)=  L_2(g,h) \cdot (\Delta^+\Gamma^+)^*,\vspace*{-1mm}
\]
we have,

\begin{theorem}\label{Lin:powers}
It is undecidable for deterministic linear CF-languages $L \subseteq \Al^*$
if there exists a power $w^n \in L$ for some
$n\ge 2$ and nonempty $w \in \Al^*$.
\end{theorem}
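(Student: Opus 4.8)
The plan is to imitate the proofs of Theorems~\ref{Lin:copy} and~\ref{Lin:pow}, but now allowing an unbounded number of repeated blocks. Recall the deterministic linear CF-language
\[
L_2(g,h) = \{g(w) u^r h(u) w^r  :  u, w \in \Gamma^+\}
\]
and set
\[
L_\omega(g,h) = L_2(g,h)\cdot (\Delta^+\Gamma^+)^*.
\]
First I would argue that $L_\omega(g,h)$ is indeed a deterministic linear CF-language: the factor $(\Delta^+\Gamma^+)^*$ is regular, and concatenating a regular language on the right of a deterministic linear CF-language keeps it deterministic linear (the stack reversal still happens only once, inside the $L_2$-part, and after the stack is emptied the automaton just runs a finite-state check on the regular tail — here the disjointness $\Gamma\cap\Delta=\emptyset$ is what lets the automaton detect where the $L_2$-part ends and the regular part begins).

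Next comes the core equivalence: $L_\omega(g,h)$ contains a power $w^n$ for some $n\ge 2$ and nonempty $w$ if and only if the PCP instance $(g,h)$ has a nonempty solution. For the ``if'' direction, if $s\in\Gamma^+$ satisfies $g(s)=h(s)$, take $u=w=s$; then $g(s)s^r h(s)s^r = (g(s)s^r)^2\in L_2(g,h)\subseteq L_\omega(g,h)$ is a square, so a power with $n=2$ exists. For the ``only if'' direction, suppose some word $v\in L_\omega(g,h)$ equals $y^n$ with $|y|\ge 1$ and $n\ge 2$. Write $v = g(w)u^r h(u) w^r \cdot w_1 u_1 \cdots w_k u_k$ with $u,w\in\Gamma^+$, $w_i\in\Delta^+$, $u_i\in\Gamma^+$. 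The key observation is that the $\Gamma/\Delta$-alternation pattern of $v$ is $\Delta^+\Gamma^+\Delta^+\Gamma^+\cdots$ (starting with $\Delta$, since $g(w)\in\Delta^+$, and alternating blocks thereafter), so $v$ is a concatenation of $k+2$ maximal $\Delta$-blocks alternating with $k+2$ maximal $\Gamma$-blocks. Because $y^n$ must have the same alternation structure and $n\ge 2$, the period $|y|$ must be compatible with this block structure; I would show $y$ itself must consist of one maximal $\Delta$-block followed by one maximal $\Gamma$-block, $y = PQ$ with $P\in\Delta^+$, $Q\in\Gamma^+$, and that $n=k+2$, forcing $g(w)=h(u)$ (the two $\Delta$-blocks coming from the $L_2$-part must be equal copies of $P$) and $u^r = w^r$ (the two $\Gamma$-blocks from the $L_2$-part must be equal copies of $Q$). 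Hence $w=u$ and $g(w)=h(w)$, i.e., $w$ is a nonempty solution of $(g,h)$.

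I would then conclude: since it is undecidable whether an instance $(g,h)$ of the PCP has a (nonempty) solution — and one may assume $g,h$ non-erasing, as noted after the statement of the PCP — and since the map $(g,h)\mapsto L_\omega(g,h)$ is effective and always produces a deterministic linear CF-language, the property ``$L$ contains a power $w^n$ for some $n\ge 2$ and nonempty $w$'' is undecidable for deterministic linear CF-languages.

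I expect the main obstacle to be the ``only if'' direction, specifically the bookkeeping that a power $y^n$ sitting inside $L_\omega(g,h)$ must respect the maximal-block alternation $\Delta^+\Gamma^+$ and hence align exactly so that $y$ is a single $\Delta$-block followed by a single $\Gamma$-block, with the period matching the block count. One has to rule out degenerate alignments where $y$ straddles a block boundary; the disjointness of $\Gamma$ and $\Delta$ together with the fact that every block is nonempty (so no block can be ``skipped'') is what makes this work, but writing it cleanly requires a careful, if routine, argument about periods of words with a prescribed factor structure. Everything else — the automaton construction and the ``if'' direction — is straightforward.
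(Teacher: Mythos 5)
You correctly identified where the difficulty lies, but the step you defer (``I would show $y$ itself must consist of one maximal $\Delta$-block followed by one maximal $\Gamma$-block \dots and that $n=k+2$'') is not just the hard part --- it is false, and in fact the construction $L_\omega(g,h)=L_2(g,h)\cdot(\Delta^+\Gamma^+)^*$ does not give a valid reduction. Your block-counting argument only yields that $y\in(\Delta^+\Gamma^+)^m$ for some $m\ge 1$ with $nm=k+2$; nothing forces $m=1$. Concretely, take \emph{any} non-erasing instance $(g,h)$ and any $u\ne w$ in $\Gamma^+$, and let $z=g(w)u^r h(u)w^r\in L_2(g,h)$. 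Since $z\in(\Delta^+\Gamma^+)^2$, the word $zz$ lies in $L_2(g,h)\cdot(\Delta^+\Gamma^+)^2\subseteq L_\omega(g,h)$ and is a square. So $L_\omega(g,h)$ \emph{always} contains a power $w^2$, whether or not $(g,h)$ has a solution, and the ``only if'' direction collapses. (The same issue is present in the paper, which gives only the one-line construction; note also the paper's $L_n$ is written once with a marker $\#$ and once without --- with the marker the $\#$-count is not divisible by $n$, without it the block-count argument does work for $L_n$ because the number of $\Delta^+\Gamma^+$ units is pinned to exactly $n$.)

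The theorem itself is easy to salvage, and more cheaply than with $L_\omega$. Use $L_2(g,h)$ alone: every word of $L_2(g,h)$ lies in $(\Delta^+\Gamma^+)^2$ with exactly two maximal $\Delta$-blocks, and if $y^n\in L_2(g,h)$ with $y$ nonempty and $n\ge 2$, then $y$ begins with a letter of $\Delta$ and ends with a letter of $\Gamma$ (it is both a prefix and a suffix of $y^n$), so $y$ contributes $m\ge 1$ maximal $\Delta$-blocks to each period and $nm=2$ forces $n=2$, $m=1$. Then $y^2$ is a square in $L_2(g,h)$ and the argument of Theorem~\ref{Lin:copy} produces a solution of $(g,h)$; conversely a solution gives a square as in your ``if'' direction. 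Alternatively, if you want to keep a tail, insert a separator before each tail unit, $L_2(g,h)\cdot(\#\Delta^+\Gamma^+)^*$: then the only $\Gamma$-to-$\Delta$ junction in a word of the language is the one inside the $L_2$-part, so a period can only start there, forcing $n=2$ and an empty tail. Either way you should state the block-counting step explicitly rather than asserting $m=1$.
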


Regarding the decidability result in Theorem~\ref{Reg:inclusion} for regular languages, we state an open problem for linear CF-languages.

\begin{problem}\label{Prob:closure}
Is it decidable for linear CF-languages $L$ if
$C_L \subseteq L$, i.e., if $L$ is closed under taking squares?
\end{problem}

In many special cases the answer to the above problem is positive. Indeed, according to Greibach~\cite{Greibach:linear}
if a language $L_1cL_2$, with $L_i \subseteq (\Al \setminus \{c\})^*$ for $i=1,2$, is a linear CF-language then $L_i$ is regular for $i=1$ or $i=2$.

The inclusion problem of squares in a language becomes undecidable ''just above'' the regular languages. Recall that counter languages are accepted with
(nondeterministic) pushdown automata with a single pushdown letter for the stack.

\begin{theorem}\label{Counter:inclusion}
It is undecidable for counter languages $L \subseteq \Al^*$
if $C_{\Al^*} \subseteq L$ holds.
\end{theorem}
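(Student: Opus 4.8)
The plan is to reduce from the PCP, following closely the strategy of Theorem~\ref{Lin:copy} but replacing the linear-PDA construction by a one-counter construction. Given an instance $(g,h)$ of the PCP with $g,h\colon \Gamma^*\to\Delta^*$ and $\Gamma\cap\Delta=\emptyset$, I want to build a counter language $L$ such that $C_{\Al^*}\subseteq L$ fails precisely when the instance has a nonempty solution. As in the previous proofs, the natural candidate is (the complement of) a language that isolates a single "bad" square, namely a square $ww$ that encodes a match $g(w')=h(u')$ together with $w'=u'$.

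The key steps, in order, are as follows. First I would fix the alphabet $\Al=\Gamma\cup\Delta\cup\{\#\}$ (with $\#$ a fresh separator) and consider the language $M(g,h)=\{\,g(w)\,\#\,w^r\mid w\in\Gamma^+\,\}$ together with a companion checking $h(u)\,\#\,u^r$; the crucial observation is that a single linear counter scan can verify a statement of the form "the $\Delta$-block is the $g$- (or $h$-)image of the reversed $\Gamma$-block", because $g$ and $h$ are non-erasing, so one can push a fixed number of counter increments per input letter and check for zero at the separator — this is exactly the counter analogue of the deterministic-linear-PDA argument used for $L_2(g,h)$. Second, I would assemble from these pieces a counter language $D\subseteq\Al^*$ whose words of the special form $ww$ are exactly the encodings of nonempty PCP solutions, so that $D$ contains a square iff $(g,h)$ has a nonempty solution. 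Third — and this is the step that does the real work — I would take $L=D^c$. Since $C_{\Al^*}\subseteq L=D^c$ says "no square lies in $D$", which negates "$(g,h)$ has a nonempty solution", decidability of the inclusion problem would decide the PCP, a contradiction; but for this I must know that the complement $D^c$ of the counter language $D$ is again a counter language, or at least that $L$ itself can be taken to be a counter language directly.

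The main obstacle is exactly that last point: one-counter languages (in the nondeterministic sense, accepting with a single stack letter) are \emph{not} closed under complement, so I cannot simply complement $D$ as I did with deterministic linear CF-languages in the corollary after Theorem~\ref{Lin:copy}. The way around it is to design the complement \emph{directly} as a nondeterministic counter automaton: a word fails to encode a nonempty solution for one of finitely many syntactic reasons (it is not of the form $ww$, or the $\#$-structure is wrong, or for \emph{some} choice of the relevant block the $g$-image relation is violated, witnessed by a counter mismatch, etc.), and each such reason is checkable by a nondeterministic one-counter automaton; one then takes the union over these finitely many guesses, using that counter languages \emph{are} closed under union. So the bulk of the proof is a careful case analysis showing that "this string is not a valid square-encoded PCP solution" is a one-counter predicate, after which the reduction closes. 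I would therefore phrase the construction so that the acceptance condition of $L$ is "I guess a flaw and verify it with the counter", mirroring the accept-on-failure style already used in the proof of Lemma~\ref{MC:onec}.

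One technical caveat I would flag, exactly as the authors do for Theorem~\ref{Reg:marked}, is that the reduction is from the PCP and so the undecidability obtained is for alphabets of size at least five; I would note that the bound on $|\Al|$ inherited here is likewise not optimized, since the encoding inflates the PCP alphabet by the morphism images and the separator.
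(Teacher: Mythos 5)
Your overall strategy is the paper's strategy: reduce from the PCP and, since one\hyphenation{counter}-counter languages are not closed under complement, build the complement \emph{directly} as a nondeterministic counter automaton that guesses one of finitely many flaws and verifies it, closing with closure under union --- exactly the accept-on-failure style of Lemma~\ref{MC:onec}. However, two points in your write-up need repair before the argument actually closes. First, your ``crucial observation'' in step one is false as stated: a single counter \emph{cannot} verify the positive statement that a $\Delta$-block equals $g$ applied to a $\Gamma$-block; pushing $|g(a)|$ per letter and checking zero at the separator only verifies the \emph{length} relation, not the content (this is precisely why $L_2(g,h)$ needs a genuine stack). Fortunately the positive check is never needed: only the \emph{negations} are, and ``$g(u)\ne h(v)$'' is one-counter checkable by guessing a position $n$, remembering the letter of $g(u)$ at that position while counting up by $|g(a_i)|$ per input letter, then counting down through $h(v)$ to locate position $n$ there and accepting on a letter mismatch. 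You should delete the positive claim and state the negative check explicitly.

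Second, your list of flaws includes ``it is not of the form $ww$,'' and checking non-squareness in general with one counter is exactly the paper's open Problem~1 (is the complement of $C_{\Al^*}$ one-counter?), so you must not rely on it. The fix, which the paper uses, is to choose the target language so that this check never arises: take $\Gamma=\Al\cup\{\#\}$ and let $L$ accept every word not in the regular set $\Al^*\#\Al^*\#$, together with every $u\#v\#$ for which either $u\ne v$ (guess a position, count, compare letters across the marked midpoint) or $g(u)\ne h(v)$ (as above). Then every square lies in $L$ except those of the form $w\#w\#$ with $g(w)=h(w)$, so $C_{\Gamma^*}\subseteq L$ iff the instance has no solution. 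Note also that the paper's encoding avoids reversals entirely ($u\#v\#$ with $u=v$, not $u=v^r$), which simplifies the position bookkeeping; your reversal-based blocks are not fatal but add needless complication. Your closing caveat about alphabet size is fine and consistent with the remark before Theorem~\ref{Reg:marked}.
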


\begin{proof}
We prove the claim by reduction from the PCP. Let $(g,h)$ be an instance of the PCP
for $g,h\colon \Al^* \to \Delta^*$. Let $\Gamma = \Al \cup \{\#\}$, where $\#$ is a new letter.

\medskip
We now describe a (nondeterministic) counter language $L \subseteq \Gamma^*$
such that $w \in L$ if
\begin{enumerate}
\item[(1)] $w \ne u\#v\#$,
for all $u,v \in \Al^*$, or
\item[(2)] $w=u\#v\#$ but $u \ne v$ or
$g(u) \ne h(v)$.
\end{enumerate}
Clearly, $L$ contain all squares of $\Gamma^*$ except those words $w\#w\#$ with $g(w)=h(w)$.
Thus $C_{\Al^*} \subseteq L$ if and only if  the instance $(g,h)$ has no solutions, and the claim follows from the undecidability of the PCP.

Starting from its initial state the automaton $M$ branches to one of the three separate
lines of  actions,
and it accepts the input $w$ if one of these lines leads to acceptance.
Note that counter automata are nondetermistic, and their languages are closed under union.

\medskip
 The automaton $M$ accepts if
 \begin{itemize}
 \item[1.]
$w \notin \Al^*\# \Al^* \#$. Since  $\Al^*\# \Al^* \#$ is regular this can be
decided with the states, without counter actions.

From this on, we suppose that $w=u\#v\#$, where $u,v \in \Al^*$.

\item[2.]  $u \ne v$.  While reading the prefix $u$ and increasing the counter,
the automaton guesses a position
$n \le |u|$, say with the letter $a$. The counter stack has then $c^n$.
The automaton reads on until it reaches the first $\#$,
after which it pops the counter to gain the $n$th letter, say $b$, of $v$.
It accepts if $a \ne b$, that is the $n$th letter of $u$ is not equal to $n$th letter of $v$.

\item[3.]  $g(u) \ne h(v)$. As in case 2 the automaton can guess a position of different letter while in the images $g(u)$ and $h(v)$.
E.g., while reading~$u=a_1\cdots a_m$, the automaton
guesses a position $n=|g(a_1a_2 \cdots a_k)|+j$ by pushing
$c^{|g(a_i)|}$ $ i=1,2, \ldots, k$, to the counter when reading $a_1,\dots, a_k$ and then pushing $c^j$, for $j < |g(a_{k+1})|$, and remembers the symbol $a$ which is the $i$th symbol of     $g(a_{k+1})$ and, therefore, the $n$th symbol of $g(u)$.

Then, when reading $v=b_1\cdots b_t$, $M$ decreases the counter with $c^{|h(b_i)|}$ until $|h(b_1\cdot b_\ell)|+s=n$ for some $\ell$ such that $s< |h(b_{\ell +1})|$ and checks that the symbol in position $s$ of $h(b_{\ell +1})$, that is, the  $n$th symbol of $h(v)$ are different.
\end{itemize}
It is straightforward to see that $M$ accepts the language $L$.
\end{proof}

\subsection{Shuffles}

We begin with the shuffle operation on CF-languages.
By considering the generating grammars, it is evident that
the family of context-free languages is closed under
concatenation. However, it is not closed under shuffle of languages.
To see this, consider the languages
\[
L_1 = \{a^nb^n \mid n \ge 1 \}, \
L_2 = \{c^ma^m \mid m \ge 1  \} \ \text{ and } \
L=L_1 \shuffle L_2 \cap a^*c^*b^*a^*.
\]
If $L_1\shuffle L_2$ is a CF-language, then so is $L$ as CF-languages are closed under intersection with regular languages and $a^*c^*b^*a^*$ is regular. But $L = \{a^nc^mb^n a^m \mid n,m \ge 1\}$ is a well-known non-CF-language.
Note that the languages $L_1$ and $L_2$ are even
deterministic linear CF-languages.

\medskip
The following example showing that CF-languages over binary alphabets are not closed under shuffle is due to Chris K\"ocher \cite{CKoc}.

\begin{example}
Let $L_1=\{ a^n b a^n\mid n\ge 1\}$ and $L_2=\{ b^n a b^n\mid n\ge 1\}$. Now
$$
(L_1\shuffle L_2)\cap a^*b^*a^*b^* = \{a^mb^{n+1}a^{m+1}b^n\mid m,n\ge 1\},
$$
which is not a context-free language. As the language $a^*b^*a^*b^*$ is regular, and intersection of a context-free language and a regular language is always context-free, we get that  $L_1\shuffle L_2$ is not a context-free language.
\end{example}

Next, we study the shuffle problem for linear CF-languages.

\begin{theorem}\label{Lin:shuffle}
It is undecidable for deterministic linear CF-languages $L$ if $L$
contains an element of $w \shuffle w$ for some $w$.
\end{theorem}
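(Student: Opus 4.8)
The plan is to mimic the construction used for Theorem~\ref{Lin:copy}, but now arranging the PCP solution check so that it appears via the \emph{self-shuffle} $w\shuffle w$ rather than as a square. Recall that in $L_2(g,h) = \{g(w)\,u^r\,h(u)\,w^r : u,w\in\Gamma^+\}$ a square occurs exactly when $g(w)u^r = h(u)w^r$, i.e.\ when $w=u$ is a PCP solution. The difficulty in reusing this directly is that $w\shuffle w$ does not force the two interleaved copies to be a clean concatenation; a word in $w\shuffle w$ can split the two copies arbitrarily. So I would instead use the disjointness of alphabets as a \emph{filter}: design a deterministic linear CF-language $L$ whose words have a rigid block structure $\Gamma^+\,\Delta^+\,\Delta^+\,\Gamma^+$ (or similar), so that the only way a word of $L$ can lie in $v\shuffle v$ is if the shuffle is in fact the trivial concatenation $v\cdot v$ with $v\in\Gamma^+\Delta^+$, at which point we are back in the situation of Theorem~\ref{Lin:copy}.

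Concretely, I would take $L = L_2(g,h)$ itself, or a lightly decorated variant. Suppose $v\shuffle v$ meets $L_2(g,h)$, witnessed by a word $x = g(w)u^r h(u) w^r \in v\shuffle v$. Write $v = v_\Gamma$-part followed by $v_\Delta$-part in the sense dictated by where $v$'s letters fall; the key point is that $x$ begins with letters of $\Delta$ (from $g(w)\in\Delta^+$) and the first $\Gamma$-letter of $x$ is the start of $u^r$. Because both copies of $v$ must supply the letters of $x$ in order, and because $\Gamma\cap\Delta=\emptyset$, a counting argument on the prefixes made of $\Delta$-letters forces $v$ to start with exactly the first half of $x$'s $\Delta$-block (or a well-controlled fraction of it) — iterating this block-by-block through the four zones $\Delta^+,\Gamma^+,\Delta^+,\Gamma^+$ pins down the interleaving completely and shows $x = vv$ with $|v| = |x|/2$. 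Then $v = g(w)u^r$ on the first half and $v = h(u)w^r$ on the second half, so $g(w)u^r = h(u)w^r$, and by disjointness $g(w)=h(u)$, $u=w$: a PCP solution. Conversely any nonempty PCP solution $w$ gives the square $(g(w)w^r)(g(w)w^r)\in L_2(g,h)$, which lies in $v\shuffle v$ for $v=g(w)w^r$.

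The main obstacle I anticipate is making the ``forced interleaving'' argument airtight: a priori the two copies of $v$ could both contribute $\Delta$-letters in the leading $g(w)$ block in an unbalanced way, and one must rule out, using only $|v|$ and the alternation pattern of the alphabets, every interleaving except the concatenation. The clean fix is to choose the language so that the alphabet alternates \emph{enough} times and each block is a single coherent unit; I expect the authors arrange $L_2(g,h)$'s four-block structure precisely so that a word in $v\shuffle v$ must have $v$ equal to the concatenation of the ``first halves'' of each block, which — since the blocks pair up as $(g(w),h(u))$ and $(u^r,w^r)$ — collapses to $v=g(w)u^r=h(u)w^r$ only when lengths match block-by-block, again giving the solution. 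Once that combinatorial lemma on $v\shuffle v\cap \Delta^+\Gamma^+\Delta^+\Gamma^+$ is in place, the undecidability follows immediately by reduction from the PCP, exactly as in Theorem~\ref{Lin:copy}, using that $L_2(g,h)$ is a deterministic linear CF-language.
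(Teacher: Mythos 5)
Your reduction has a genuine gap, and it sits exactly where you flagged the ``main obstacle'': the combinatorial lemma that a word of $L_2(g,h)$ lying in $v\shuffle v$ must arise from the trivial concatenation $vv$ is false. The four-block alternation $\Delta^+\Gamma^+\Delta^+\Gamma^+$ does not pin down the interleaving. Concretely, take $\Gamma=\{1\}$, $\Delta=\{a,b\}$, $g(1)=aab$, $h(1)=b$. This instance has no solution, since $|g(1^n)|=3n>n=|h(1^n)|$ for all $n\ge 1$. Yet with $u=w=1$ the word $g(1)\,1\,h(1)\,1=aab1b1$ belongs to $L_2(g,h)$ and also to $(ab1)\shuffle(ab1)$: let the first copy of $ab1$ occupy positions $1,3,4$ and the second copy occupy positions $2,5,6$. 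So $v\shuffle v$ meets $L_2(g,h)$ for $v=ab1$ although $(g,h)$ has no solution, and the forward direction of your reduction fails. The reason is that the two copies of $v$ may split the blocks of $x$ unevenly: here one copy supplies the entire block $u^r$ while the other supplies the entire block $h(u)$, so the relation you can extract is a genuinely weaker shuffle relation, not the equation $g(w)u^r=h(u)w^r$.

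The paper repairs this not through the alphabet alternation but by inserting occurrence-counting end markers: it uses $L_\#(g,h)=\{\$\, g(w)u^r\#\,\$\, h(u)w^r\# : u,w\in\Gamma^*\}$, which is still deterministic linear. Since each copy of $v=\$\, w\#$ contains exactly one $\$$ and one $\#$, the only word of the form $\$\, u\#\,\$\, v\#$ in $\$ w\#\shuffle \$ w\#$ is $\$ w\#\$ w\#$: the second copy cannot begin before the second $\$$ of the target, so the first copy alone must produce the prefix $\$ u\#$, forcing $u=w$ and then $v=w$. With that rigidity the problem does collapse to the square case exactly as you intended. Your converse direction (a PCP solution yields an element of $v\shuffle v\cap L$) is fine in either version.
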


\begin{proof}
Recall the language  $L_2(g,h)$ from~\eqref{eq:lin}, and consider its modification
\begin{equation}\label{eq:lin2}
L_\#(g,h) = \{\$ g(w) u^r\# \$ h(u) w^r \# :  u, w \in \Gamma^*\},
\end{equation}
where the markers $\$$ and $\#$ appears only in the given positions. Clearly, $L_\#(g,h)$ is deterministic linear CF-language.

\medskip
The only word of the form $\$ u\#\$ v\#$ in the shuffle  $\$ w\# \shuffle \$ w\#$ is $\$ w\# \$ w\#$, which in its turn belongs to $L_\#(g,h)$ if and only if
the instance $(g,h)$ has a solution, the claim follows.
\end{proof}

\begin{corollary}
It is undecidable for deterministic linear CF-languages
$L \subseteq \Al^*$ if
all shuffles $w \shuffle w$ for $w \in \Al^*$ are in $L$.	
\end{corollary}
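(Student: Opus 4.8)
The plan is to obtain this by complementation, exactly as the corollary following Theorem~\ref{Lin:copy} was obtained from Theorem~\ref{Lin:copy}. The basic identity to exploit is
\[
\Bigl(\forall w\in\Al^*:\ (w\shuffle w)\subseteq L\Bigr)
\iff
\neg\Bigl(\exists w\in\Al^*:\ (w\shuffle w)\cap L^c\ne\emptyset\Bigr),
\]
so that deciding whether every self-shuffle lies in $L$ amounts to deciding whether $L^c$ avoids all self-shuffles.

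First I would recall that the deterministic linear CF-languages are effectively closed under complementation, the same fact already used just after Theorem~\ref{Lin:copy}. Given an instance $(g,h)$ of the PCP, I would then take the language $L_\#(g,h)$ from \eqref{eq:lin2} and set $L=L_\#(g,h)^c$; by the closure property this $L$ is again a deterministic linear CF-language, obtained algorithmically from $(g,h)$. Next I would invoke the analysis carried out in the proof of Theorem~\ref{Lin:shuffle}: since every word of $L_\#(g,h)$ carries two $\$$-markers and two $\#$-markers occurring in the fixed order $\$,\#,\$,\#$, the only element of $w\shuffle w$ that can possibly belong to $L_\#(g,h)$ is $\$ w\#\$ w\#$, and this word lies in $L_\#(g,h)$ precisely when the instance $(g,h)$ has a solution. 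Hence $L^c=L_\#(g,h)$ meets some self-shuffle if and only if $(g,h)$ has a solution, and therefore, by the displayed identity, $\bigcup_{w\in\Al^*}(w\shuffle w)\subseteq L$ holds if and only if $(g,h)$ has no solution.

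I would conclude by observing that a decision procedure for the property ``all self-shuffles $w\shuffle w$ belong to $L$'' would, applied to $L=L_\#(g,h)^c$, decide whether $(g,h)$ has no solution, and hence, after negating, decide the PCP itself --- a contradiction. There is no genuinely hard step: the two points that deserve a line of justification are the effective closure of deterministic linear CF-languages under complement (already used in the paper) and the fact that the characterisation in the proof of Theorem~\ref{Lin:shuffle} is an equivalence quantified over all $w$, so that passing to the complement introduces no extra self-shuffles to worry about.
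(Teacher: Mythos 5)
Your proposal is correct and follows essentially the same route as the paper: negate the statement of Theorem~\ref{Lin:shuffle}, observe that $\forall w\colon w\shuffle w\subseteq L$ is equivalent to $L^c$ containing no self-shuffle, and use the closure of deterministic linear CF-languages under complement applied to $L_\#(g,h)$. The paper's proof is just a terser version of the same argument.
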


\begin{proof}
We consider the negation of the proposition in Theorem~\ref{Lin:shuffle}:
\[
\neg \exists w: w \shuffle w \cap L \ne \emptyset
\iff
\forall w:  w \shuffle w \cap L = \emptyset
\iff
\forall w: w \shuffle w \subseteq L^c.
\]
Since the family of deterministic linear CF-languages is closed under complement, the claim follows.
\end{proof}

By Rizzi and Vialette~\cite{RizziVialette}
and Buss and Soltys~\cite{BussSoltys}, it is an NP-complete problem
if a word $v$ is a self-shuffle, i.e., if there exists a word $w$ such that
$v=w \shuffle w$.

\medskip
Consider the shuffle $w \shuffle w^r$, where $w^r$ is the reverse of $w$.
For a language $P$, let
\[
M_P = \bigcup_{w \in P} w \shuffle w^r.
\]

See Henshall et al.~\cite{Henshall} for the following result.

\begin{theorem}\label{rev:CF}
The language $M_{\Al^*}$ is not context-free.
\end{theorem}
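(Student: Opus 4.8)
## Proof proposal for Theorem~\ref{rev:CF}

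The plan is to show that $M_{\Al^*}$ is not context-free by intersecting it with a suitable regular language and extracting a known non-CF language via a letter-to-letter morphism, in the same spirit as the proof of Lemma~\ref{shuffle:marked}. So the first step is to fix a binary alphabet $\Al = \{a,b\}$ (it suffices to treat $|\Al|=2$, since for larger alphabets one restricts to a two-letter subalphabet by intersecting with $\{a,b\}^*$, which is regular, and CF-languages are closed under that operation).

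The key idea is to pick a regular ``template'' $R$ so that $M_{\Al^*} \cap R$ forces the shuffle $u \shuffle u^r$ to interleave $u$ and $u^r$ in a rigid, readable pattern. A natural candidate: take words of the form $w = a^{n} b a^{m} b a^{n}$ or similar, and look at $w \shuffle w^r$; since $w^r$ has the same ``block skeleton'' reversed, the intersection with a regular language of the shape $a^* b a^* b a^* b a^* \cdots$ should pin down the exponents to satisfy two simultaneous equality constraints, which a CF-language cannot do. Concretely, I would aim to show that for a well-chosen regular $R$, the intersection $M_{\Al^*}\cap R$, after applying a letter-to-letter morphism, equals something like $\{a^n b^m a^n b^m \mid n,m\ge 1\}$ or $\{a^n c^m b^n a^m\mid n,m\ge 1\}$ — exactly the non-CF witness used earlier in the excerpt in the discussion of shuffle non-closure. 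Then, were $M_{\Al^*}$ context-free, closure of CF-languages under intersection with regular sets and under morphisms (both invoked already in Lemma~\ref{shuffle:marked}) would make this witness context-free, a contradiction.

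The main obstacle is choosing $R$ correctly: one must verify that \emph{every} way of shuffling $w$ with $w^r$ that lands in $R$ does indeed encode the desired two-counter relation, and that conversely every instance of that relation is realized by some genuine shuffle. Because the shuffle operation is so permissive, a carelessly chosen template will admit ``cheating'' interleavings that destroy the rigidity; the reversal structure of $w^r$ relative to $w$ has to be exploited so that the regular constraint, reading left to right, alternately consumes a prefix of $w$ and a suffix of $w$ and thereby couples the two ends. I expect the cleanest route is to let $w$ range over a constrained regular subset already (e.g.\ $w\in a^+ b a^+$), compute $w \shuffle w^r$ on that subset, and intersect with a regular language that linearizes the four resulting blocks in a fixed order; then a direct pumping-lemma argument (Lemma~\ref{PLCF}) on the extracted language finishes it, with no morphism needed if the extracted language is already over $\Al$.

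As a fallback, if the intersection approach proves delicate, one can argue directly with the context-free pumping lemma (Lemma~\ref{PLCF}) applied to $M_{\Al^*}$ itself: choose a long word $z \in u \shuffle u^r$ with $u$ highly structured (say $u = a^N b a^N$, so $z$ contains a controlled number of $a$-blocks and exactly two $b$'s), and show that any factorization $z = p\,q\,r\,s\,t$ with $|qrs|\le p$, $|qs|\ge 1$, pumping $q$ and $s$ simultaneously, either changes the count of $b$'s (impossible, since any word in $M_{\Al^*}$ has an even number of each letter and in particular the number of $b$'s must be twice the number of $b$'s in the underlying $w$) or unbalances the $a$-exponents so that no underlying $w$ with $w \shuffle w^r \ni$ the pumped word can exist. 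The bookkeeping of which $a$-runs can be simultaneously pumped is the part that needs care, but it is routine once the witness word is fixed.
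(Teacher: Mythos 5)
There is a genuine gap: your text is a strategy outline, not a proof, and the step you yourself flag as "the main obstacle" is precisely the entire mathematical content of the theorem. You never exhibit the regular language $R$, never prove a characterization of $M_{\Al^*}\cap R$, and never carry out the pumping argument. This matters because the "cheating interleavings" you worry about are real and pervasive: for example $a^nb^ma^nb^m\in M_{\{a,b\}^*}$ (shuffle $b^ma^n$ with its reversal), but $M_{\{a,b\}^*}\cap a^*b^*a^*b^*$ also contains many words $a^ib^ja^kb^l$ with $i\ne k$ and $j\ne l$, while excluding others with the correct letter parities (e.g.\ $aabb\notin M_{\{a,b\}^*}$, and $a^3bab\notin M_{\{a,b\}^*}$ even though it has the right total counts). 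So the intersection is governed by a nontrivial system of inequalities on the exponents, and identifying it is exactly the hard combinatorial work. Your fallback is also not sound as stated: pumping with exponent $3$ adds an even number of each letter, so the parity/doubling observation cannot rule it out, and the pumped word might lie in $w'\shuffle w'^r$ for a completely different $w'$ — excluding that again requires knowing which words belong to $M_{\Al^*}$, which is the characterization you have not supplied. (Also, the witness $\{a^nc^mb^na^m\}$ from the non-closure example lives over three letters and arises from shuffling two \emph{different} languages; there is no indication it is extractable from $M_{\Al^*}$.)

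For comparison: the paper does not prove this theorem at all — it cites Henshall, Rampersad and Shallit, whose argument is indeed of the shape you describe (intersect with a bounded regular language of the form $a^*b^*a^*b^*\cdots$ and prove an exact arithmetic characterization of the resulting exponent set, which is then shown not to be context-free). So your instinct about the method is right, and the closure properties you invoke (intersection with regular sets, morphisms, Lemma~\ref{PLCF}) are the correct tools; but the proof would only be complete once the characterization of the intersection is stated and verified in both directions, which is where all the difficulty lies.
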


We give a bit simpler result related to the previous theorem as an example.

\begin{example}
We show that the language  $M_P$ need not be regular for regular $P$.
Indeed, let $P=a^+b^+$ over $\Al=\{a,b\}$.
Then $L=M_P\cap a^*b^*a^*= \{a^n b^{2m} a^n\mid n,m\ge 1\}$ is non-regular (by the Pumping Lemma). Therefore, $M_P$ is not regular.

\medskip
Similarly, let $L=\{a^nb^n\mid n\ge 1\}$. Now,
\[
M_L\cap a^*b^*a^*=
\bigcup_{n \ge 1}\left(a^n b^n \shuffle b^n a^n\right)
\cap a^*b^*a^* = \{a^n b^{2n} a^ n \mid n \ge 1\}
\]
is not context-free, and, therefore,  $M_L$ is not context-free,
since $a^*b^*a^*$ is regular.
\end{example}

Our next theorem concerns linear CF-languages that need not
be deterministic.

\begin{theorem}\label{Lin:revshuf}
It is undecidable for linear CF-languages
if $L$ contains an element of $w \shuffle w^r$ for some~$w$.
\end{theorem}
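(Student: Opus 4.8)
The plan is to mimic the structure of the proof of Theorem~\ref{Lin:shuffle}, but to build a language for the shuffle-with-reverse problem, exploiting that here we are allowed a \emph{nondeterministic} linear CF-language. The key idea is that the operation $w \shuffle w^r$ interleaves $w$ with its mirror image, so a language of the form $\{\alpha \beta^r : \dots\}$ is naturally suited to it: if a word $\alpha\beta^r$ lies in $w \shuffle w^r$, then (given suitable markers to force a clean split into a prefix and a reversed suffix) we should be able to force $\alpha = w$ and $\beta^r = w^r$, i.e. $\alpha = \beta = w$. Concretely, starting from $L_2(g,h)$ of \eqref{eq:lin}, I would pass to the reversed-tail form and introduce two fresh markers $\$$ and $\#$, defining something like
\[
L'_\#(g,h) = \{\$ g(w) u^r \#\ \bigl(\$ h(u) w^r \#\bigr)^r : u,w \in \Gamma^*\}
 = \{\$ g(w) u^r \#\ \# w\, h(u)^r\, \$ : u,w\in\Gamma^*\}.
\]
This is still a linear CF-language — in fact one can build a one-reversal PDA for it directly — but I want the target of the shuffle to look like a word $v$ followed by $v^r$, so the markers must be placed so that the first half reads $\$ \ldots \#$ and the mirrored second half reads $\# \ldots \$$.

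The core combinatorial step is: for which words $v$ can $v v^r$ (with $v$ beginning with $\$$ and ending with $\#$) lie in $w \shuffle w^r$? Because the shuffle of $w$ and $w^r$ has exactly two occurrences of each marker symbol, and because $v v^r$ begins with $\$\$$-pattern $\ldots$ — here I need to be careful — the placement of $\$$ and $\#$ in $v v^r = \$ v_0 \#\, \#\, v_0^r \$$ is forced, and matching this against $w \shuffle w^r$ forces $w = \$ w_0 \#$ with $v_0 = w_0$ and hence $v = w$. Then membership of $v v^r$ in $L'_\#(g,h)$ forces $g(w) u^r = $ (the mirror of $h(u) w^r$, appropriately) which, using the disjointness of $\Gamma$ and $\Delta$ exactly as in Theorems~\ref{Lin:copy} and~\ref{Lin:shuffle}, yields $g(w)=h(u)$ and $w=u$, i.e. a PCP solution. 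Conversely a PCP solution gives such a word. So $L'_\#(g,h)$ contains an element of $w\shuffle w^r$ iff $(g,h)$ has a solution, and undecidability follows from that of PCP.

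The main obstacle I anticipate is getting the marker bookkeeping exactly right: I must choose the positions of $\$$ and $\#$ so that (a) the language remains \emph{linear} context-free — the single stack reversal must still suffice after the rearrangement, which is why the markers should be placed at the "outside" and at the reversal point rather than in the interior — and (b) the only word of the required shape $v v^r$ in the shuffle $w \shuffle w^r$ is the fully synchronized one $\$ w_0 \# \# w_0^r \$$, with no spurious interleavings sneaking a marker into the wrong place. This second point is the analogue of the one-line observation in Theorem~\ref{Lin:shuffle} that "the only word of the form $\$ u\#\$ v\#$ in $\$ w\#\shuffle\$ w\#$ is $\$ w\#\$ w\#$", and I expect it to go through by the same counting-of-marker-occurrences argument, but it deserves to be stated carefully since here the two halves are mirror images rather than identical copies. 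Everything else — that the constructed language is linear CF, and the disjoint-alphabet collapse to a PCP solution — is routine and parallels the earlier proofs.
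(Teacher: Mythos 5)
Your construction fails at two separate points, and the first is already fatal: the language
\[
L'_{\#}(g,h)=\{\$\, g(w)\, u^r\, \#\#\, w\, h(u)^r\, \$ : u,w\in\Gamma^*\}
\]
is not linear context-free --- it is not even context-free. Intersecting it with the regular set $\$\Delta^*\#\#\Gamma^*\$$ forces $u=\eps$ and leaves $\{\$\, g(w)\,\#\#\, w\, \$ : w\in\Gamma^*\}$, in which the two occurrences of $w$ appear in the \emph{same} orientation; for $g$ injective on letters this is a marked-copy language, hence not context-free (erase the markers and use the morphism argument of Lemma~\ref{shuffle:marked}). Reversing the second block of $L_\#(g,h)$ destroys exactly what made $L_2(g,h)$ linear: the nested mirror-image pattern $g(w)\cdots w^r$, which one stack reversal can verify, becomes the same-orientation copy pattern $g(w)\cdots w$, which it cannot. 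Indeed, a one-reversal PDA that pushes $\$ g(w)u^r\#$ and pops against $\# w h(u)^r\$$ would be \emph{checking} $u=w$ and $g(w)=h(u)$, so it would accept only the solution encodings --- and then your reduction would be to the decidable emptiness problem for linear CF-languages.

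The second gap is the one you flagged and hoped would ``go through by the same counting-of-marker-occurrences argument''; it does not. A word $\$\alpha\#\#\beta\$$ with markers only in the displayed positions lies in $z\shuffle z^r$ whenever $z=\$\gamma\#\delta$ with $\alpha\in\gamma\shuffle\delta^r$ and $\beta\in\delta\shuffle\gamma^r$; nothing forces $\delta=\eps$, hence nothing forces $\beta=\alpha^r$. Concretely, for the instance $g(a)=0$, $g(b)=1$, $h(a)=1$, $h(b)=0$ (which has no solution) and $u=w=ab$, the word $\$\,01ba\,\#\#\,ab01\,\$$ of $L'_{\#}(g,h)$ lies in $z\shuffle z^r$ for $z=\$\,0\,\#\,ab1$, so your claimed equivalence with the PCP admits false positives even where the construction is defined. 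You should also be aware that the paper's own proof rests on the same unproved forcing claim for its language $L_1=\{\#g(v)\#\#h(v)^r\#\}$, and the analogous phenomenon occurs there as well (e.g.\ $\#01\#\#01\#\in(\#0\#1)\shuffle(\#0\#1)^r$), so this step genuinely requires an argument or a more rigid encoding, not a counting remark. Finally, note that the paper's route is structurally different from yours: it does not adapt $L_2(g,h)$ at all, but builds the palindromic encoding $\#g(v)\#\#h(v)^r\#$ indexed by a \emph{single} word $v$, generated synchronously by the rules $T\to g(a)Th(a)^r$ --- that synchrony is precisely what keeps the language linear.
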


\begin{proof}
The proof is by reduction from the PCP.
Let $(g, h)$ be an instance of the PCP with
$g, h\colon \Sigma^* \to \Delta^*$ and let  $\#$ be a new symbol $\# \notin \Delta$.
Consider the linear CF-language $L_1= L(G)$
generated by the grammar $G$ with two non-terminals $S$ and $T$
together with the production rules
\begin{align*}
S &\to \#g(a)T h(a)^r\# \ \text{ for all $a\in \Sigma$},\\
T &\to  g(a)T h(a)^r \ \text{ for all $a\in \Sigma$},\\
T &\to \#\#.
\end{align*}
Hence
\begin{equation}\label{eq:wwr}
L_1 = \{\#g(v)\#\# h(v)^r\# \mid v \in \Sigma^+ \}.
\end{equation}
Now, $L_1$ contains an element of $w \shuffle w^r$ for a word $w$
if and only if $w=\#g(v)\#=\#h(v)\#$ for some nonempty word $v$
for which then $g(v)=h(v)$.
The claim follows from the undecidability of the PCP.
\end{proof}

The linear CF-language $L_1$ in \eqref{eq:wwr}
also gives immediately the following result.

\begin{theorem}\label{Lin:revcopy}
It is undecidable for linear CF-languages $L$, if
$ww^r \in L$ for some word $w$.	
\end{theorem}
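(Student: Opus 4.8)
The plan is to reuse the linear CF-language $L_1$ from \eqref{eq:wwr} essentially verbatim, since the grammar $G$ is explicitly linear (each production has at most one non-terminal on the right-hand side), so $L_1$ is a linear CF-language. Recall that
\[
L_1 = \{\#g(v)\#\# h(v)^r\# \mid v \in \Sigma^+ \}.
\]
First I would observe that a word of the form $ww^r$ lies in $L_1$ exactly when we can split a member $\#g(v)\#\# h(v)^r\#$ down the middle as $w \cdot w^r$. Since each word in $L_1$ has even length $2|g(v)|+4$ (here using that $g$ is non-erasing so $|g(v)| = |h(v)| \ge 1$; one should note $|g(v)|$ and $|h(v)|$ need not be equal in general, so this parity/midpoint argument needs the extra structure provided by the $\#$-markers), the natural candidate is $w = \#g(v)\#$ and the suffix $\#h(v)^r\#$. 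The suffix equals $w^r = (\#g(v)\#)^r = \#\,g(v)^r\,\#$ precisely when $h(v)^r = g(v)^r$, i.e.\ $g(v) = h(v)$.

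The key step is then to argue the converse direction and to rule out any other factorization: I would show that if $\#g(v)\#\#h(v)^r\# = ww^r$ for some $w$, then necessarily $w = \#g(v)\#$. Since the total length is $2(|g(v)|+2)$, the word $w$ has length $|g(v)|+2$, so $w$ is exactly the length-$(|g(v)|+2)$ prefix $\#g(v)\#$ of the member word (using that $h$ is non-erasing, hence $|h(v)|\ge 1$, so the prefix of that length does not run past the first two $\#$'s into $h(v)^r$). Then $w^r = \#g(v)^r\#$ must equal the length-$(|h(v)^r|+2)$ suffix $\#h(v)^r\#$, which forces $|g(v)| = |h(v)|$ and $g(v)^r = h(v)^r$, hence $g(v) = h(v)$. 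Thus $ww^r \in L_1$ for some $w$ if and only if the instance $(g,h)$ has a nonempty solution, which is undecidable.

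The main obstacle is bookkeeping around the lengths and the role of the markers: because the PCP does not a priori guarantee $|g(v)| = |h(v)|$, the midpoint split only works cleanly once one exploits that both halves are sandwiched by the single symbol $\#$, and one must verify no "off-center" factorization can accidentally produce a palindromic-type structure $ww^r$. Since $\#$ occurs exactly at positions $1$, $|g(v)|+2$, $|g(v)|+3$, and $2|g(v)|+4$ of a member word (and not in $g(v)$ or $h(v)$, as $\#\notin\Delta$), a short case analysis on where the $\#$'s in $w$ and in $w^r$ must align pins down the factorization uniquely. Everything else — that $L_1$ is linear CF, and that the reduction is effective — is immediate from the grammar $G$ already exhibited.
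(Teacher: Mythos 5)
Your proposal is correct and takes essentially the same route as the paper, which derives this theorem immediately from the same language $L_1$ of \eqref{eq:wwr} used for Theorem~\ref{Lin:revshuf}. Your extra case analysis on the positions of the $\#$-markers (which is needed, since $|g(v)|$ and $|h(v)|$ are not a priori equal, and which you correctly flag and resolve) just makes explicit the verification the paper leaves implicit.
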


The proof of Theorem~\ref{Lin:revshuf} also gives the following corollary.

\begin{corollary}\label{CF:mirror}
It is undecidable for linear CF-languages $L$
if $L$ contains a palindrome, i.e, a word $w$ such that  $w=w^r$.
\end{corollary}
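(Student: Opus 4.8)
The plan is to reuse, essentially verbatim, the linear CF-language $L_1$ constructed in the proof of Theorem~\ref{Lin:revshuf}. Recall that for an instance $(g,h)$ of the PCP with $g,h\colon \Sigma^*\to\Delta^*$ and a fresh symbol $\#\notin\Delta$, the grammar $G$ given there generates
\[
L_1 = \{\#g(v)\#\# h(v)^r\# \mid v\in\Sigma^+\},
\]
and $L_1$ is a linear CF-language. I claim that $L_1$ contains a palindrome if and only if the instance $(g,h)$ has a solution; since the PCP is undecidable, this immediately yields the corollary.

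First I would take an arbitrary word $w\in L_1$, say $w=\#g(v)\#\# h(v)^r\#$ with $v\in\Sigma^+$, and compute its reversal $w^r=\#h(v)\#\# g(v)^r\#$. The key point is that $\#\notin\Delta$, so the blocks $g(v)$ and $h(v)$ are $\#$-free and every word of $L_1$ has the rigid shape $\#\,\Delta^*\,\#\#\,\Delta^*\,\#$. Comparing the positions of the four occurrences of $\#$ in $w$ and in $w^r$ forces $|g(v)|=|h(v)|$, and then a letter-by-letter comparison gives $g(v)=h(v)$ (equivalently $g(v)^r=h(v)^r$). Hence $w=w^r$ implies that $v$ is a solution of $(g,h)$. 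Conversely, if $g(v)=h(v)$ for some $v\in\Sigma^+$, then $w=\#g(v)\#\# g(v)^r\#\in L_1$ is manifestly equal to its own reversal, so $L_1$ contains a palindrome. Combining the two directions, any decision procedure for ``does a given linear CF-language contain a palindrome?'' applied to $L_1$ would decide whether $(g,h)$ has a solution, a contradiction.

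There is essentially no hard step here; the one point deserving a moment's care is the matching of the $\#$-markers when equating $w$ and $w^r$, which is exactly where the disjointness $\#\notin\Delta$ and the rigid shape of the words of $L_1$ are used. I would also note that the undecidability of the PCP is understood with respect to \emph{nonempty} solutions, which matches the restriction $v\in\Sigma^+$ already built into $L_1$, so no separate argument about the empty word is needed.
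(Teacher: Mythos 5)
Your proposal is correct and is exactly the paper's argument: the paper derives the corollary from the same language $L_1=\{\#g(v)\#\#h(v)^r\#\mid v\in\Sigma^+\}$ of Theorem~\ref{Lin:revshuf}, observing that $L_1$ contains a palindrome if and only if the PCP instance $(g,h)$ has a solution. Your explicit check of the $\#$-marker positions and the nonemptiness of $v$ just spells out details the paper leaves implicit.
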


The following result is a trivial consequence of Theorem~\ref{Reg:marked},
which stated the result already for regular languages.

\begin{theorem}\label{Lin:marshuff}
It is undecidable for linear context-free languages
if $L$ contains an element of $w \shuffle \ov{w}$ for some $w$.
\end{theorem}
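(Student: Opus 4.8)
The plan is to observe that Theorem~\ref{Lin:marshuff} is an immediate weakening of Theorem~\ref{Reg:marked}: every regular language is in particular a linear context-free language, so the undecidability of the question ``does $R$ contain an element of $w \shuffle \ov{w}$ for some $w$?'' already established for regular $R$ carries over verbatim to the larger class of linear CF-languages. Concretely, I would argue that if the membership-type problem for linear CF-languages were decidable, then, restricting any decision procedure to inputs that happen to be presented as finite automata (or as right-linear grammars), we would obtain a decision procedure for the regular case, contradicting Theorem~\ref{Reg:marked}.

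The only point that needs a word of care is the representation issue: the problem for linear CF-languages is posed with the input given as a linear context-free grammar (or a one-reversal PDA), whereas Theorem~\ref{Reg:marked} takes the input as a finite automaton. Since there is an effective translation from a finite automaton to an equivalent right-linear (hence linear) context-free grammar, an instance of the regular problem can be turned algorithmically into an instance of the linear CF problem defining the same language, and a decision procedure for the latter would answer the former. This is the one ``step'' of the argument, and it is entirely routine; there is no real obstacle.

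Thus I would write the proof as essentially one sentence. A possible phrasing:

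\begin{proof}
Every regular language is a linear context-free language, and a finite automaton can be converted effectively into an equivalent linear context-free grammar. Hence a decision procedure for the problem ``does $L$ contain an element of $w \shuffle \ov{w}$ for some $w$?'' over linear context-free languages would, in particular, decide the same problem over regular languages, which is impossible by Theorem~\ref{Reg:marked}.
\end{proof}

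If one instead wanted the statement for \emph{deterministic} linear CF-languages, the inclusion argument would no longer suffice and one would have to redo the construction of Theorem~\ref{Reg:marked} with markers separating the two halves (as in the passage from $L_2(g,h)$ to $L_\#(g,h)$ in Theorem~\ref{Lin:shuffle}); but since Theorem~\ref{Lin:marshuff} only asserts the result for linear CF-languages in general, the trivial reduction above is all that is required.
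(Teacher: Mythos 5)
Your proof is correct and matches the paper's own treatment exactly: the paper also derives Theorem~\ref{Lin:marshuff} as a trivial consequence of Theorem~\ref{Reg:marked}, using the fact that every regular language is a linear context-free language. Your extra remark about the effective translation of finite automata into linear grammars is a reasonable (if routine) point of care that the paper leaves implicit.
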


\EXTRA{
A context-free language $L$ is  said to be \emph{$k$-turn} if it is accepted by a pushdown automaton that makes at most $k$ turns, from increase to decrease mode, in its stack.

\begin{theorem}\label{undec:rev}
It is undecidable for deterministic $2$-turn context-free languages
if $L$ contains an element of $w \shuffle w^r$ for some $w$.
\end{theorem}

\begin{proof}
Let again $(g, h)$ be an instance of the PCP with
$g, h\colon \Gamma^* \to \Delta^*$ where the alphabets are disjoint,
and $\Delta$ does not contain the letter $\#$.
Let $\Al=\Gamma \cup \Delta \cup \{\#\}$. Consider the set
\begin{equation}\label{eq:rev}
M_1(g, h) = \{ug(u)^r \#\# h(v)v^r \mid u, v \in \Gamma^* \}.
\end{equation}
The language $M_1(g, h)$ is deterministic $2$-turn context-free.
Moreover, $M_1(g, h)$ has an element $ug(u)^r \#\# h(v)v^r \in w  \shuffle w^r$ if and only if
$u=v$ and $g(u)=h(v)$, i.e., iff the instance $(g, h)$ has a solution.
\end{proof}

The above theorem and its proof can be reformulated as follows.

\begin{corollary}
It is undecidable for two deterministic linear CF-languages
$L_1$ and $L_2$ if $L_1L_2$ contains an element of $w\shuffle w^r$
for some $w$.
\end{corollary}
}

In contrast to the decidability result in Theorem~\ref{Reg:mirror-n}
we can extend the proof of Theorem~\ref{Lin:revshuf} for the next claim.

\begin{theorem}\label{CF:powers}
Let $k\ge 1$ be fixed.
It is undecidable for CF-languages $L$ if $L$ contains a word of the form $w_1 w_1^r \cdots w_k w_k^r$ for some $w_1, \ldots, w_k$.
\end{theorem}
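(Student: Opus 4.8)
The plan is to reduce from the PCP, adapting the construction used in the proof of Theorem~\ref{Lin:revshuf}. Given an instance $(g,h)$ of the PCP with $g,h\colon \Sigma^* \to \Delta^*$, I would introduce a new marker symbol $\# \notin \Delta$ and build a context-free language whose only words of the target shape $w_1 w_1^r \cdots w_k w_k^r$ are exactly those that encode a PCP solution. Recall from \eqref{eq:wwr} that the linear CF-language $L_1 = \{\#g(v)\#\# h(v)^r\# \mid v \in \Sigma^+\}$ has the property that $L_1$ meets $w \shuffle w^r$ iff $(g,h)$ has a solution; here $\#g(v)\#$ is a palindromic-looking block whose reverse is $\#h(v)^r\#$ when $g(v)=h(v)$. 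The key observation to exploit is that a word of the form $w_1 w_1^r w_2 w_2^r \cdots w_k w_k^r$ is, in particular, a concatenation of $k$ even-length palindromes.

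The main step is to design, for fixed $k$, a CF-language $L$ such that membership of a word of the form $w_1 w_1^r \cdots w_k w_k^r$ in $L$ forces a solution. The cleanest route: take $L = L_1 \cdot (\$\Delta^*\$)^{k-1}$ or, more carefully, pad $L_1$ with $k-1$ trivially-palindromic dummy blocks over fresh symbols so that the only way to factor a word of $L$ into $k$ palindromic pieces $w_iw_i^r$ is to align the first piece with the $\#g(v)\#\#h(v)^r\#$ portion. Concretely I would set, with new symbols $\#$ and $\$$ disjoint from $\Delta$,
\[
L = \{\#g(v)\#\# h(v)^r\# \mid v \in \Sigma^+\}\cdot \{\$\$\}^{k-1},
\]
which is context-free (a concatenation of a linear CF-language with a regular language). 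A word $x \in L$ has the form $\#g(v)\#\#h(v)^r\#\,(\$\$)^{k-1}$, and I claim $x = w_1w_1^r\cdots w_kw_k^r$ for some words $w_i$ iff $g(v)=h(v)$: the trailing block $(\$\$)^{k-1}$ can absorb $k-1$ of the blocks by taking $w_2 = \cdots = w_k = \$$ (so $w_iw_i^r = \$\$$), leaving $\#g(v)\#\#h(v)^r\#$ to be of the form $w_1 w_1^r$, which happens precisely when $\#g(v)\# = (\#h(v)^r\#)^r = \#h(v)\#$, i.e.\ $g(v)=h(v)$. For the converse direction I must argue no \emph{other} factorization into $k$ palindrome-pairs exists; this follows because $\$ \notin \Delta \cup \{\#\}$, so any factor $w_iw_i^r$ of $x$ either lies entirely within the $\$$-suffix or entirely within the $\#\cdots\#$-prefix (a mixed palindrome-pair would need matching letters across the $\#$/$\$$ boundary), and a counting/parsing argument on the number of $\#$'s (there are exactly three, an odd number, forcing the $\#$-containing part to be a single pair $w_1w_1^r$ with $w_1 = \#g(v)\#$ up to where the $\#$'s sit) pins down the factorization.

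The step I expect to be the main obstacle is exactly that converse rigidity argument: verifying that the padding genuinely prevents spurious factorizations of $w_1w_1^r\cdots w_kw_k^r$ across block boundaries, and in particular handling the possibility that some $w_i$ is empty (the factorization $w_1w_1^r\cdots w_kw_k^r$ allows empty $w_i$, so the "number of blocks" is not by itself constraining) or that $w_1$ could be longer than $\#g(v)\#$ and swallow part of the $\$$-region. I would fix this by choosing the dummy blocks to carry distinct fresh symbols, e.g.\ replacing $\{\$\$\}^{k-1}$ by $\$_1\$_1\$_2\$_2\cdots\$_{k-1}\$_{k-1}$ with pairwise distinct new letters $\$_1,\dots,\$_{k-1}$, each disjoint from $\Delta \cup \{\#\}$; then each $\$_j$ occurs exactly twice and consecutively, and an easy induction shows the only way to write this string as a product of $k$ reverse-pairs is the obvious one, with the $\#$-block forced to be a single pair. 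Once that combinatorial lemma is in place, the reduction is immediate and the undecidability follows from the undecidability of the PCP, completing the contrast with the decidable regular case of Theorem~\ref{Reg:mirror-n}.
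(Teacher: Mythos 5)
Your construction is essentially the paper's: the paper also takes $L_k=L_1\cdot(cc)^{k-1}$ with $L_1$ from \eqref{eq:wwr} and a single fresh letter $c$, and leaves the verification to the reader. Where you differ is that you take the converse (``rigidity'') direction seriously, and you are right to: with a single repeated padding letter the reduction is not airtight. For $k\ge 3$ the whole block $c^{2(k-1)}$ can be absorbed by one pair (take $w_k=c^{k-1}$), which frees two pairs for the prefix, and $\#g(v)\#\#h(v)^r\#$ does split as a product of two even palindromes $\#g(v)\#\cdot\#h(v)^r\#$ whenever $g(v)$ and $h(v)$ happen to be even-length palindromes --- no PCP solution required. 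Your repaired padding $\$_1\$_1\cdots\$_{k-1}\$_{k-1}$ with pairwise distinct fresh letters closes this hole cleanly: every nonempty factor of the form $w_iw_i^r$ begins and ends with the same letter, so it cannot cross the boundary between the $(\Delta\cup\{\#\})$-prefix and a $\$_j$-zone, nor between two distinct $\$_j$-zones; hence at least $k-1$ of the $k$ pairs are consumed by the padding (none of them can be empty there), and exactly one nonempty pair is left to cover $\#g(v)\#\#h(v)^r\#$, forcing that word to be an even-length palindrome, which (since $\#\notin\Delta$) happens if and only if $g(v)=h(v)$. So your version is correct and in fact supplies an argument that the paper's one-line proof is missing.

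Two small corrections to your sketch. First, $\#g(v)\#\#h(v)^r\#$ contains four occurrences of $\#$, not three; since four is even, parity of the $\#$'s alone does not force the prefix to be a single pair (that is exactly why the two-palindrome split above is possible), so in your argument the distinct-letter padding is doing the real work, not the $\#$-count. Second, your first candidate $L_1\cdot\{\$\$\}^{k-1}$ suffers from the same defect as the paper's $(cc)^{k-1}$; only the version with pairwise distinct $\$_j$ should be kept.
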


\begin{proof}
We modify the linear CF-language from \eqref{eq:wwr}.
Consider the 'next' linear CF-language $L_k$ for the instance $(g, h)$:
$
L_k = L_1 \cdot (cc)^{k-1},
$
where $c$ is a new letter.
\end{proof}

For regular languages $R$ it is clearly decidable if $\{w, w^r \}^* \subseteq R$ for given $w$.

\begin{problem}
Is it decidable for regular languages $R$ if $\{w, w^r \}^* \subseteq R$ for some $w$?
How about CF-languages?	
\end{problem}

On the other hand, it is decidable for context-free languages $L$ if $L \subseteq \{w, w^r \}^*$ for some $w$.
Indeed, by the Pumping property, the length of possible words $w$ has an
effective upper bound.

\section{Conclusions}

Our aim was to present a survey on the decidability statuses of special membership problems for copies, marked copies, reversed copies, self-shuffles and shuffles with marked or reversed copies. Two cases, the self-shuffle and shuffle with the reversed copy, remain unsolved. We also studied special inclusion problems regarding powers and especially squares of words. Several related open problem were stated.

\end{document}